\newtheorem{lemma}{Lemma}[section]
\newtheorem{remark}{Remark}[section]
\newtheorem{theorem}{Theorem}[section]
\newtheorem{definition}{Definition}[section]
\newcommand{\biggg}{\bBigg@{3}}
\newcommand{\Biggg}{\bBigg@{3.5}}
\newcommand{\bigggg}{\bBigg@{4}}
\newcommand{\Bigggg}{\bBigg@{4.5}}
\begin{document}
\title{A Distributed Privacy-Preserving Learning Dynamics in General Social Networks}

\author{Youming~Tao,~\IEEEmembership{Student Member,~IEEE,}
        Shuzhen~Chen,~\IEEEmembership{Student Member,~IEEE,}
        Feng~Li, ~\IEEEmembership{Member,~IEEE,}
        Dongxiao~Yu, ~\IEEEmembership{Senior Member,~IEEE,}
        Jiguo~Yu, ~\IEEEmembership{Fellow,~IEEE,}
        Hao~Sheng,~\IEEEmembership{Member,~IEEE,}
\IEEEcompsocitemizethanks{\IEEEcompsocthanksitem Y. Tao, S. Chen, F. Li, and D. Yu are with School of Computer Science and Technology, Shandong University, Qingdao, 266237, China. E-mail: \{youming.tao, szchen\}@mail.sdu.edu.cn, \{fli, dxyu, xzcheng\}@sdu.edu.cn. 
\IEEEcompsocthanksitem J. Yu is with Big Data Institute, Qilu University of Technology, Jinan, 250353, China. Email: jiguoyu@sina.com.
\IEEEcompsocthanksitem H. Sheng is with State Key Laboratory of Software
Development Environment, School of Computer Science and Engineering, Beihang University, Beijing 100191, China, and Beijing Advanced Innovation Center for Big Data and Brain Computing, Beihang University, Beijing, 100191, P.R.China. E-mail: shenghao@buaa.edu.cn.

}
%
%
}


\IEEEtitleabstractindextext{%
\begin{abstract}
  In this paper, we study a distributed privacy-preserving learning problem in social networks with general topology. The agents can communicate with each other over the network, which may result in privacy disclosure, since the trustworthiness of the agents cannot be guaranteed. Given a set of options which yield unknown stochastic rewards, each agent is required to learn the best one, aiming at maximizing the resulting expected average cumulative reward. To serve the above goal, we propose a four-staged distributed algorithm which efficiently exploits the collaboration among the agents while preserving the local privacy for each of them.  In particular, our algorithm proceeds iteratively, and in every round, each agent i) randomly perturbs its adoption for the privacy-preserving purpose, ii) disseminates the perturbed adoption over the social network in a nearly uniform manner through random walking, iii) selects an option by referring to the perturbed suggestions received from its peers, and iv) decides whether or not to adopt the selected option as preference according to its latest reward feedback. Through solid theoretical analysis, we quantify the trade-off among the number of agents (or communication overhead), privacy preserving and learning utility. We also perform extensive simulations to verify the efficacy of our proposed social learning algorithm.

\end{abstract}
%
\begin{IEEEkeywords}
  Privacy preservation, distributed learning, social networks.
\end{IEEEkeywords}}

\maketitle

\IEEEdisplaynontitleabstractindextext

%
\IEEEpeerreviewmaketitle

\IEEEraisesectionheading{\section{Introduction}\label{sec:intro}}
  Given a set of options which yield unknown stochastic rewards/payoffs, learning the best among them is a commonly encountered issue in a wide spectrum including human society \cite{ShenWJZ-IJCAI15}, robotics \cite{PiniBFDB-ICSI12} and biology \cite{SeeleyB-BES99}. This problem can be casted as the following sequential decision-making problem: every individual (a.k.a., agent) sequentially selects one of the unknown option to observe its reward feedback and updates its adoptions (i.e., its preference to the options) accordingly; the goal is to maximize the expected cumulative reward yielded in the above learning process without the prior knowledge about the options' stochastic qualities.

  In a social group, each agent can share its experience with each other, to improve the efficiency of the above learning process. As shown in \cite{CelisKV-PODC17}, an iterative social learning approach to the above problem consists of the following two stages in each step: every agent first takes an option sample according to the options' popularities among all the agents, and then decides whether or not to adopt the sampled option as preference based on the latest observation on its stochastic reward signal. Such a ``sampling-and-adopting'' social learning paradigm does not need any historical observations and thus can work with limited local memory at each agent \cite{SuZL-SIGMETRIC19}; nevertheless, it entails global information (i.e., all agents' latest adoptions) as input to calculate the options' popularities. Unfortunately, a well structured communication network over the agents (e.g., a complete graph such that each agent can easily calculate the populairties of the options through one-hop communications) may not always be available, while collecting the global information over a social network with general topology usually results in considerable communication overhead. Therefore, it is very challenging to enable efficient collaboration among the agents over a general social network.

  Our another concern is the privacy issue in the collaboration among the agents, since the agents in a social network usually are not forced to behave trustfully to each other and sharing experience with other untrusted peers may lead to privacy disclosure for each agent. \textit{Local Differential Privacy} (LDP) is a privacy-preserving mechanism where data owners perturb their private data locally before sharing them \cite{Kasiviswanathan-JOC11,DuchiJW-FOCS13}. The concept of LDP has been applied in distributed learning frameworks. The data owners add noise to their local gradients \cite{ShokriS-CCS15,AbadiCGMMTZ-CCS16} or local model parameters \cite{WeiLDMYFQP-TIFS20} before reporting them to a central aggregation server. In addition, LDP has also been applied in on-line decision-making problems such that the rewards of options are perturbed before being reported to a central decision maker \cite{GajaneUK-ALT18,WangZWCKW-RecSys20,RenZLS-arXiv20}. Although the potential of LDP has been recognized, we consider a very different decentralized social learning process where untrustworthy agents communicate private experience with each other over a network without central infrastructures, so as to make decisions collaboratively. Hence, how to apply LDP in our decentralized learning process is still an open problem.

  In this paper, we propose a privacy-preserving distributed social learning algorithm for general social networks. It proceeds iteratively and includes the following four stages in each round:
  \begin{itemize}
    \item \textbf{Perturbing}: By leveraging the notion of LDP, each agent applies a randomized perturbation to its current adoption for the purpose of privacy preserving.
    \item \textbf{Disseminating}: We propose a random walk-based information dissemination method, by which each agent disseminates its perturbed adoption over the (multi-hop) social network with general topology.
    \item \textbf{Sampling}: Each agent then selects one of the options according to the perturbed suggestions received from its peers in the network.
    \item \textbf{Adopting}: Each agent finally decides whether or not to adopt the option selected in the last stage, according to its recent stochastic quality signal.
  \end{itemize}
  The above algorithm inherits the efficiency of the state-of-the-art ``sampling-and-adopting'' social learning paradigm, such that each agent with limited local memory maintains its current adoption (or preference) only. Furthermore, it integrates a randomized perturbing mechanism and a random walk-based information dissemination method for privacy-aware learning in general social networks. Nevertheless, according to our brief sketch on our proposed algorithm (especially the first two stages), on on hand, each agent shares its adoptions in a randomized manner such that the knowledge received by an agent is incomplete; on the other hand, the knowledge is perturbed and its usability in the learning process may be reduced by the perturbation. Therefore, some fundamental questions are still open: \textit{With the randomly sampled and perturbed knowledge for each agent, how does our four-staged social learning algorithm converge to the off-line optimal solution? Specifically, how is the expected average cumulative reward yielded by the adoption policy maximized? To what extend the privacy preservation can be ensured and at what cost?} In this paper, we answer the above questions by solid theoretical analysis. We also perform extensive simulations to verify the efficacy of our algorithm.

  The remainder of this paper is organized as follows. We survey related literature in Sec.~\ref{sec:rel}. In Sec.~\ref{sec:model}, we discuss about the motivations behind our algorithm design, before introducing our system model and formulating our learning problem. Therein, we also introduce some preliminaries which will be useful to our algorithm design and analysis. We then present the design of our four-staged privacy-preserving distributed social learning algorithm and the corresponding analysis in Sec.~\ref{sec:alg} and Sec.~\ref{sec:analysis}, respectively. The simulation results are reported in Sec.~\ref{sec:exp}. We finally conclude our paper and propose some promising research directions for future in Sec.~\ref{sec:con}.

\vspace{-2ex}
\section{Related Work}\label{sec:rel}
  \subsection{Social Learning}
  %

    Optimizing the decision-making process to maximize the expected reward is an essential problem for socialized individuals. Learning the best option for an isolated individual with time-invariant finite memory has been proved impossible in \cite{CoverH-TIT70,XuY-SIGMETRICS18}. Nevertheless, a common wisdom may suggest that interacting with each others to share the choices in the social group may contribute to the success of the learning process. Specifically, one can learn the experiences from its peers so as to avoid making similar mistakes in the decision-making process \cite{Bandura-book69,ImmorlicaMT-WWW19}. Whereas the learning algorithms in early studies involve either sampling stage~\cite{BoydR-AJOS85} or adopting stage~\cite{Henrich-AA10}, it is demonstrated in \cite{McelreathBELW-PT08} that combining the two steps together could be a better learning strategy empirically. The two-staged learning strategy is then applied in sociology and economics \cite{EllisonF-QJE95,Cabrales-IER00,KrafftZPPASTP-CoRR16}, where the two stages are both crucial for the learning process. Our work is partially inspired by the recent work \cite{CelisKV-PODC17}. It investigates the dynamics of the two-staged learning strategy which entails global information as input, such that one in the social network can be aware of the choices of all the others. Unfortunately, single-hop social networks are not always available, whereas collecting the choices over a general (multi-hop) social network may induce considerable communication overhead. Hence, in this paper, one of our contributions is to study the learning dynamics in general social networks where the individuals interact with each other by multi-hop communications.

  \vspace{-2ex}
  \subsection{Privacy Preservation in Machine Learning} \label{ssec:prilearning}
    Privacy has emerged as one of the main concerns in machine learning research \cite{LiuDSRFL-ACS21}. One choice is to apply cryptography-based methods, e.g., secure multi-party computation \cite{MohasselZ-SP17,BonawitzIKMMPRSS-CCS17} and homomorphic encryption \cite{DannerJ-DAIS15,WangDCCZCH-TKDE18}. However, the cryptography-based methods may induce considerable computation overhead. Therefore, another branch of studies rely on the notion of \textit{Differential Privacy} (DP) \cite{DworkR-FTTCS14}. In the traditional \textit{global} DP, data are first collected from their owners to a trusted third-party. When the data are queried, the aggregated result for the query is perturbed by the third-party before being released to the untrusted requesters \cite{GuptaHRU-JOC13,CaoYXX-TKDE19}. The application of the global DP in federated learning is investigated in \cite{NaseriHC-NDSS22}. In each iteration of FL, participants first submit their local models to a trusted central server, and the server then feeds a perturbed aggregation to the participants. The noise-adding (or perturbing) technique is also used in designing defense mechanisms against attacks to well trained prediction models. For example, \cite{JiaSBZG-CCS19} proposed to add noise to the prediction results of a target classifier so as to defend against black-box membership inference attacks. In \cite{OrekondySF-ICLR20}, predictions are perturbed to poison the training objectives of model stealing attackers.

    According to the definition of the global DP, it is used only when there is a trusted third-party to collect data from individual data owners and to protect their privacy in the meanwhile; however, the trustworthiness of the (central) third-party infrastructure may not be guaranteed, and we have to consider locally protecting the privacy of the data owners. To serve the above goal, \textit{local} DP (LDP) is proposed as a distributed variant of DP. For example, in \cite{ShokriS-CCS15,AbadiCGMMTZ-CCS16}, data owners add noise to (or perturb) the gradients calculated locally before reporting them to a central server where the perturbed gradients are aggregated to update the model parameters. In \cite{WeiLDMYFQP-TIFS20}, the data owners add noise to the locally calculated model parameters and then upload them to an aggregation server. In \cite{NaseriHC-NDSS22,WeiLDMSZP-TMC2022}, LDP is applied to achieve user-level privacy preservation in FL, by letting users upload noised local models. In addition, LDP has also been applied in on-line decision-making problems where decisions on selecting among a group of unknown options are made according to the reward feedbacks of the options in an on-line manner. For example, in  \cite{GajaneUK-ALT18,WangZWCKW-RecSys20,RenZLS-arXiv20}, a central decision maker makes selection decisions according to the perturbed observations on the rewards of the options. The above proposals all consider the privacy issue in a centralized learning process where multiple data owners need to report their private information (e.g., local gradients, model parameters or reward observations) to an untrusted central infrastructure. In contrast, we are interested in investigate the privacy issue in a decentralized learning process where multiple decision makers (i.e., agents) do not trust each other but have to share their private experience with each other for making decisions collaboratively.

\vspace{-2ex}
\section{Motivations, Models and Preliminaries}\label{sec:model}
  In this section, we first present concrete examples to motivate our algorithm in Sec.~\ref{ssec:mot}. We then introduce our system model in Sec.~\ref{ssec:sys}, and formulate our problem of social learning dynamics in Sec.~\ref{ssec:goal}. We also introduce some preliminaries in Sec.~\ref{ssec:prel}. Frequently used notations throughout this paper are summarized in \textbf{Table}~\ref{tab:notation}.
  \begin{table*}
  \caption{Frequently used symbols and notations.} \label{tab:notation}
  \vspace{-3ex}
	\begin{tabular}[t]{|p{3.5cm}|p{13.5cm}|}
	\hline
	  $\mathcal G=(\mathcal N, \mathcal E)$ & A social graph consisting of agents $\mathcal N$ and edges $\mathcal E$ \\ \hline
    %
    %
	  $N_i=|\mathcal N_i|$ & $\mathcal N_i \subseteq \mathcal N$ dentoes the set of the neighbors of agent $i$, while $N_i$ is the number of $i$'s neighbors \\ \hline
    $\mathcal M$ & A set of $M=|\mathcal M|$ unknown options \\ \hline
    %
    %
	  $\Phi^r_j \in\{0,1\}$  & Random quality indicator of option $j$ in round $r$  \\ \hline
	  $\eta_j\in[0,1]$  & Probability of $\Phi^r_j = 1$   \\ \hline
	  $\mathbf{X}^r_i = [X^r_{i,1}, \cdots, X^r_{i,M}]$ & Adoption vector of agent $i$ in round $r$ where $X^r_{i,j} \in \{0,1\}$ indicates if agent $i$ adopts option $j$ in round $r$   \\ \hline
	  $\mathbf{\widetilde X}^r_i = [\widetilde X^r_{i,1}, \cdots, \widetilde X^r_{i,M}]$ & The perturbed adoption vector of agent $i$ in round $r$ where $\widetilde X^r_{i,j}$ is the random perturbation of $X^{r-1}_{i,j}$  \\ \hline
	  $\mathcal V^r$ & The set of the (replicated) perturbed adoption vectors disseminated in round $r$\\ \hline
    $\mathcal V^r_i \subseteq \mathcal V^r$ & The set of the (replicated) perturbed adoption vectors sampled by agent $i$ in round $r$\\ \hline
	  %
    %
	  %
	  %
	  %
	  $Q^r_j$  & The popularity of option $j$ in round $r$.   \\ \hline
	  $\widetilde Q^r_{i,j}$ (resp. $\widehat Q^r_{i,j}$)  & Agent $i$'s unnormalized (resp. normalized) estimate on $Q^{r-1}_j$ in round $r$.   \\ \hline
    %
	  %
    %
    $\mathcal S^r_j = \{i\in\mathcal N \mid Y^r_{i,j} = 1\}$ & The set of agents selecting option $j$ in the sampling stage of round $r$ \\ \hline
    $S^r_j = |\mathcal S^r_j|$ & The number of agents selecting option $j$ in the sampling stage of round $r$ \\ \hline
    $D^r_j = \sum^N_{i=1}X^r_{i,j}$ & The number of agents adopting option $j$ in round $r$ \\ \hline
    $D^r = \sum^M_{j=1}D^r_{j}$ & The number of agents with non-null adoption vectors in round $r$ \\ \hline
    \end{tabular}
  \end{table*}
  
  \subsection{Motivations} \label{ssec:mot}
    As mentioned in Sec.~\ref{sec:intro}, selecting among a set of unknown option is a very common issue in daily life, while our aim is to utilize the collaboration among the individual decision makers (i.e., agents) in a social group and to guarantee the privacy of the individuals when they collaborate with each other. This idea actually is motivated by many real-world applications. For example, 
    \begin{itemize}
      \item \textbf{Clinical trials}. Suppose there is an illness with multiple treatments for patients (a.k.a., options) and a group of experimenters (a.k.a., agents) sequentially choose among the given treatments. The goal for the experimenters is to maximize the number of cured patients without prior knowledge about the effects of the treatments. On one hand, the experimenters can collaborate with each other by exchanging their experience; on the other hand, they are not willing to publish their current treatment in use, as doing so may induce the leakage of sensitive information such as the patients' health conditions and genomes.
      \item \textbf{Procurement of financial products}. In many economic scenarios, individuals (a.k.a., agent) need to make a sequence of decisions on selecting among different financial products (a.k.a., options). The individuals can share their latest adopted options with each other for common prosperity; nevertheless, an individual would not like to let others know its actual selection on the financial products, especially the trustworthiness of the other peers cannot be ensured.
    \end{itemize}

    The privacy-preserving collaboration is also demanded in many other applications such as advertising, recommendation systems etc. Inspired by these application examples, we illustrate a social learning system in Fig.~\ref{fig:sys}, where multiple agents in a social network can collaborate with each other by exchanging private experience through the communication links. Nevertheless, since the trustworthiness of the agents cannot be ensured, each of them has to preserve its privacy when collaborating with its peers. Note that we do not assume specific untrusted agents in this paper and we suppose each agent trusts none of the others. This system model will be formalized in the following.
    \begin{figure}[htb!]
      \centering
      \includegraphics[width=0.9\linewidth]{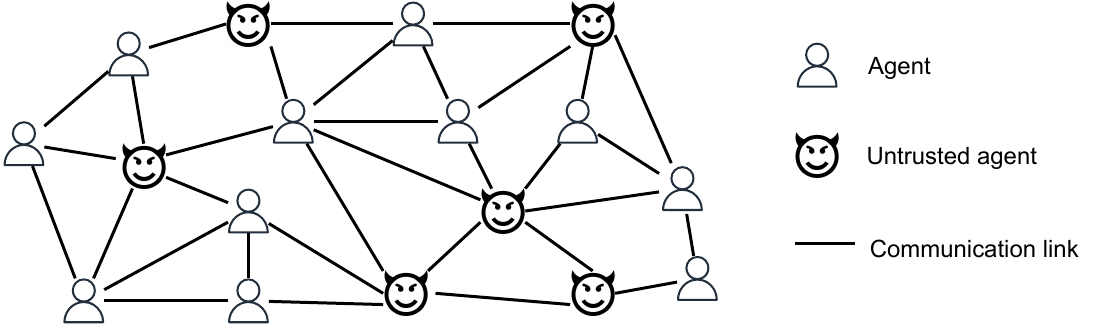}
      \vspace{-2ex}
      \caption{The agents in the graph can collaborate with each other by exchanging private information through a graph with arbitrary topology, but their trustworthiness cannot be guaranteed.}
      \label{fig:sys}
    \end{figure}
    %
    
  \subsection{System Model} \label{ssec:sys}
    We consider a social network represented by an undirected graph $\mathcal G = (\mathcal N, \mathcal E)$. $\mathcal N = \{1,2,\cdots,N\}$ denotes a set of $N$ agents while $\mathcal E$ denotes the set of the edges between the agents. For $\forall i,i' \in \mathcal N$ and $i \neq i'$, we have an edge $(i,i')\in \mathcal E$ if they can exchange messages with each other. Let $\mathcal N_i = \{i'\in\mathcal N \mid (i', i)\in\mathcal E\}$ denote the neighbors of agent $i$ and $N_i$ be the size of $\mathcal N_i$ (i.e., the degree of agent $i$ in graph $\mathcal G$). Assume each agent $i$ initially is aware of $N_i$. Without loss of generality, we suppose $\mathcal G$ is a \textit{connected} and \textit{non-bipartite} graph. We assume that the network is well synchronized such that time can be divided into a sequence of rounds $r = 1, 2, \cdots, R$, each of which consists of $\Delta$ unit time slots. We adopt a relaxed CONGEST communication model, such that each agent is allowed to send only $\mathcal O(g(N))$ messages of $\mathcal O(\log N)$ bits over each edge in a slot \footnote{The formal definition of $g(N)$ will be given in Sec.~\ref{sssec:facts}. Informally, $g(N)$ is such that $\ln N<g(N)<N$ when $N$ is sufficiently large.} We also suppose each agent has a privacy budget $\varepsilon$.

    Suppose there are $M$ options $\mathcal M = \{1,\cdots, M\}$. Each option $j \in \mathcal M$ is associated with a random \textit{quality} indicator $\Phi^r_j\in\{0,1\}$ in each round $r$ \footnote{As will be shown later, our algorithm is readily compatible to quality indicators varying randomly across slots.}. Specifically, we have $\Phi^r_j=1$ if option $j$ is ``good'' in round $r$ such that the agents choosing it can gain reward; otherwise, $\Phi^r_j=0$. For $\forall j \in \mathcal M$, $\Phi^1_j, \Phi^2_j, \cdots, \Phi^R_j$ are drawn independently and identically from an \emph{unknown} Bernoulli distribution parameterized by $\eta_j$, i.e., $\Phi^r_j \sim \mathsf{Bernoulli}(\eta_j)$, such that $\mathbb P[\Phi_j^r=1]=\eta_j$ and $\mathbb P[\Phi_j^r=0]=1-\eta_j$. Without loss of generality, suppose $\eta_1>\eta_2\ge\cdots\ge\eta_M$, such that the first option is the best.

    During the learning process, the agents need to communicate with their neighbors for exchanging private information (i.e., their latest adoptions in our case), which entails a high demand on privacy preserving. For example, for each agent, if one of its neighbors is subverted by an adversary, the adversary may be able to eavesdrop on the private information shared by the agent. Therefore, in this paper, we consider a strong threat model by leveraging local privacy \cite{Kasiviswanathan-JOC11,DuchiJW-FOCS13}, assuming that each agent trusts none of the others (especially its neighbors).

  \subsection{Social Learning Dynamics} \label{ssec:goal}
    As introduced in Sec.~\ref{sec:intro}, the social learning algorithm proceeds iteratively in rounds $r=1,2,\cdots$. Let $X_{i,j}^r \in \{0,1\}$ be a binary variable indicating if agent $i$ adopts option $j$ in round $r$. We assume that each agent $i$ adopts at most one option in each round such that $\sum^M_{j=1} X^r_{i,j} \leq 1$ for $\forall r$. Without prior knowledge on $\eta_j$ for $\forall j\in\mathcal M$, the learning goal is to minimize the following regret function
    \begin{equation} \label{eq:reg}
      \mathsf{Regret}_N(R)=\eta_1-\frac{1}{R}\sum\limits_{r=1}^{R}\sum\limits_{j=1}^{M}\mathbb{E}\left[ Q_j^{r-1}\Phi_j^r\right],
    \end{equation}
    where $Q_j^r = \frac{\sum_{i=1}^{N}X_{i,j}^r}{\sum_{j'=1}^{M}\sum_{i=1}^{N}X_{i,j'}^r}$
    %
    %
    denotes the \emph{popularity} of option $j$ in round $r$, namely the fraction of agents adopting option $j$ in round $r$. Initially, we assume $Q_j^0=\frac{1}{M}$ for $\forall j\in\mathcal M$ \footnote{Similar to \cite{CelisKV-PODC17}, such an assumption of equal popularities is not crucial to our results. Our results hold with arbitrary initial conditions.}.
    The regret function measures the difference between the off-line optimal policy and our on-line learning algorithm in terms of expected cumulative reward averaged over $N$ agents in $R$ rounds. In the off-line optimal policy, $\eta_1, \cdots, \eta_M$ are known for each agent and the agent can always adopt the best option as preference; while in our algorithm, each agent adopts one of the options sequentially with no prior knowledge about $\eta_j$. In fact, the regret function reflects how the utility of our learning algorithm (represented by the expected average cumulative reward yielded by the adoption policy learnt by our algorithm) approaches the optimum. Smaller regret implies each agent learns the optimal option more efficiently through our algorithm, resulting in higher expected average cumulative reward and thus higher learning utility.

    In this paper, we investigate a distributed social learning algorithm for general (multi-hop) social networks, such that the agents work collaboratively by exchanging experience with each other through a network with general topology, so as to minimize the  regret function. Moreover, our another concern is the privacy issue for the agents such that their communications will not result in privacy disclosure.

  \vspace{-2ex}
  \subsection{Preliminaries} \label{ssec:prel}
    \subsubsection{Metropolis-Hasting Random Walk} \label{sssec:rw}
      In this paper, we leverage the notion of random walk such that each agent samples the distribution of the adoptions in the social network. In each step, a walk carries an information token from the current agent to a random neighbor or itself. Particularly, for a \textit{Metropolis-Hasting Random Walk} (MHRW), in each step, $i$ forwards the token to a randomly chosen neighbor (or itself) $i'$ according to probability $\Psi(i,i')$
      \begin{equation} \label{eq:mhmat}
        \Psi(i,i') = \begin{cases} \min\{\frac{1}{N_i}, \frac{1}{N_{i'}}\}, ~\text{for}~\forall i'\in\mathcal N_i; \\ 1-\sum_{k \in\mathcal N_i} \Psi(i,k), ~\text{for}~i=i'. \end{cases}
      \end{equation}
      The matrix $\Psi$ (with $\Psi(i,i')$ being the $(i,i')$-th component) is the so-called \textit{transition matrix} of the MHRW. Let $q^t_{i,i'}$ denote the probability that the walk initialized by agent $i$ researches agent $i'$ after $t$ steps. When the graph $\mathcal G$ is connected and non-bipartite, $\Psi$ is a symmetric doubly stochastic matrix such that the random walk (initialized by $i$) has a unique uniform stationary distribution with $q^t_{i,i'} = \frac{1}{N}$ for $\forall i'\in\mathcal N$ when $t \rightarrow \infty$. According to \cite{LevinP-book17}, the random walk achieves a  $\alpha$-nearly uniform distribution (such that $\frac{1}{N} - \alpha \leq q^t_{i,i'} \leq \frac{1}{N} + \alpha$ for $\forall i' \in \mathcal N$) in at most $\frac{1}{\Gamma(\Psi)}\log\left(\frac{2N}{\alpha}\right)$ steps, where $\Gamma(\Psi)$ denotes the spectral gap of the transition matrix $\Psi$. In the following, to facilitate our algorithm analysis, we let $\alpha = \frac{1}{N^3}$ without sacrificing the generality and rationality of our analysis. Since $\Gamma(\Psi)$ does not depend on $N$, it is said that a MHRW achieves a nearly uniform distribution in $\mathcal O(\log N)$ steps. In fact, $\alpha$ can be made as small as required at the expense of constant.


    \subsubsection{Local Differential Privacy} \label{sssec:ldp}
    %
      %
      In this paper, we leverage the notion of \emph{Local Differential Privacy} (LDP) for the purpose of privacy preserving.
      \begin{definition}[Local Differential Privacy \cite{Kasiviswanathan-JOC11,DuchiJW-FOCS13}] \label{def:ldp}
        Let $\varepsilon$ be a positive real number and $\mathcal F$ be a randomized algorithm which takes a user's private data set $\mathcal D$ as input. Let $\mathsf{im}\mathcal D$ be the image of the algorithm $\mathcal F$. The algorithm $\mathcal F$ is said to be able to deliver $\varepsilon$-differential privacy, if for any pair of the user's possible private data $x, x'\in \mathcal D$ and any subset $\mathcal K$ of $\mathsf{im}\mathcal D$,
        \begin{equation} \label{eq:ldp}
          \frac{\mathbb P [\mathcal{F}(x)\in \mathcal K]}{\mathbb P [\mathcal{F}(x')\in \mathcal K]}\le \exp(\varepsilon).
        \end{equation}
      \end{definition}

      $\varepsilon$ is the so-called \textit{privacy budget}, which specifies the privacy loss which we can afford. Specifically, we have to allow more loss of privacy with a higher privacy budget. Especially, when $\epsilon=\infty$, the randomized algorithm cannot offer any privacy preservation.

    \subsubsection{Basic Facts} \label{sssec:facts}
      We hereby introduce a few theorems and definitions which will be useful in our algorithm analysis. 
      \begin{theorem}[Chernoff Bounds~\cite{DubhashiP-book09}] \label{thm:chbd}
        Let $Z_1, \cdots, Z_N$ be independent Bernoulli random variables with $\mathbb E[Z_i]=\pi_i$. Assume $\pi = \frac{1}{N} \sum^N_{i=1} \pi_i$. When $0 < \tau \leq 1$, we have
        \begin{equation*}
          \mathbb P\left[ \left| \frac{1}{N} \sum^N_{i=1} Z_i - \pi \right| > \pi\tau \right] \leq 2\exp\left( \frac{-N\pi\tau^2}{3} \right).
        \end{equation*}
        Specifically, when $Z_1, \cdots, Z_N$ are i.i.d. random variables such that $\pi_i = \pi$ for $\forall i=1,\cdots,N$, we have
        \begin{equation*}
          \mathbb P\left[ \frac{1}{N} \sum^N_{i=1} Z_i \geq (1+\tau) \pi \right] \leq \exp\left( - \frac{N\pi\tau^2}{3} \right), ~\forall \tau>0
        \end{equation*}
        %
        %
      %
      \end{theorem}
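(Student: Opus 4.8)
The plan is to prove this via the standard exponential moment (Chernoff) method, applied to the sum $S=\sum_{i=1}^{N}Z_i$ whose mean is $\mathbb{E}[S]=\sum_{i=1}^{N}\pi_i=N\pi=:\mu$. The entire argument rests on bounding the moment generating function $\mathbb{E}[e^{sS}]$ and then optimizing over the free parameter $s$; the two-sided statement will follow by handling the upper and lower tails separately and taking a union bound.

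First I would handle the upper tail. For any $s>0$, Markov's inequality applied to the nonnegative random variable $e^{sS}$ gives $\mathbb{P}[S\ge(1+\tau)\mu]\le e^{-s(1+\tau)\mu}\,\mathbb{E}[e^{sS}]$. Since the $Z_i$ are independent, the moment generating function factorizes as $\mathbb{E}[e^{sS}]=\prod_{i=1}^{N}\mathbb{E}[e^{sZ_i}]$, and for a Bernoulli variable $\mathbb{E}[e^{sZ_i}]=1+\pi_i(e^{s}-1)\le\exp\!\big(\pi_i(e^{s}-1)\big)$ by the elementary inequality $1+x\le e^{x}$. Multiplying over $i$ collapses the product to $\exp\!\big(\mu(e^{s}-1)\big)$, so $\mathbb{P}[S\ge(1+\tau)\mu]\le\exp\!\big(\mu(e^{s}-1)-s(1+\tau)\mu\big)$. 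Minimizing the exponent over $s>0$ yields the optimizer $s=\ln(1+\tau)$, and substituting it back gives the clean bound $\mathbb{P}[S\ge(1+\tau)\mu]\le\exp\!\big(-\mu\,[(1+\tau)\ln(1+\tau)-\tau]\big)$. The one-sided i.i.d.\ claim is then just this inequality specialized to $\pi_i=\pi$.

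Next I would treat the lower tail symmetrically, repeating the argument with $s<0$ (equivalently, applying the upper-tail machinery to $-Z_i$), which produces $\mathbb{P}[S\le(1-\tau)\mu]\le\exp\!\big(-\mu\,[(1-\tau)\ln(1-\tau)+\tau]\big)$. A union bound over the two tails accounts for the factor $2$ in the two-sided statement.

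The main obstacle, and really the only nonroutine step, is converting these sharp but unwieldy exponents into the stated $\exp(-\mu\tau^2/3)$ form. Concretely I would establish the analytic inequalities $(1+\tau)\ln(1+\tau)-\tau\ge\tau^2/3$ and $(1-\tau)\ln(1-\tau)+\tau\ge\tau^2/2\ge\tau^2/3$ on the range $0<\tau\le1$; each follows from a short single-variable calculus argument, checking that the relevant function and its first derivative vanish at $\tau=0$ and controlling the sign of its second derivative (a Taylor-remainder estimate). Feeding these two bounds back into the tail estimates and recalling $\mu=N\pi$ yields exactly $\mathbb{P}[\,|\frac{1}{N}\sum_{i=1}^{N}Z_i-\pi|>\pi\tau\,]\le2\exp(-N\pi\tau^2/3)$, which is the assertion.
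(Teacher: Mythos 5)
The paper never proves this statement: it is imported verbatim from Dubhashi--Panconesi and used as a black box, so there is no in-paper argument to compare against. Your derivation is the standard exponential-moment proof and it is correct where it applies: the factorization of the moment generating function, the bound $\mathbb{E}[e^{sZ_i}]\le\exp(\pi_i(e^{s}-1))$, the optimizing choices $s=\ln(1+\tau)$ and $s=-\ln(1-\tau)$ giving the entropy-form exponents $(1+\tau)\ln(1+\tau)-\tau$ and $(1-\tau)\ln(1-\tau)+\tau$, the calculus inequalities reducing these to $\tau^2/3$ and $\tau^2/2$ on $0<\tau\le 1$, and the union bound supplying the factor $2$ are all sound. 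The one genuine gap is in your last clause of the upper-tail discussion: the theorem's second display is asserted ``for $\forall\tau>0$,'' but the reduction $(1+\tau)\ln(1+\tau)-\tau\ge\tau^2/3$ holds only for $\tau\le1$ and fails for large $\tau$ (e.g.\ at $\tau=10$ the left side is about $16.4$ while $\tau^2/3\approx33.3$), so the i.i.d.\ claim with exponent $-\tau^2\pi/3$ does not follow from ``just specializing'' your bound; for $\tau>1$ one only gets an exponent of order $-\mu\tau\ln(1+\tau)$, i.e.\ the usual $\exp(-\mu\tau^2/(2+\tau))$ form, with $\mu=N\pi$ in your notation. This is arguably a defect of the statement as transcribed in the paper (which, compared with the standard textbook bound, also drops the factor $N$ from that exponent), but a complete write-up should either restrict the second claim to $0<\tau\le1$ or prove the $(2+\tau)$-denominator version and note that it implies the stated bound only in that range.
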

      \begin{theorem}[Hoeffding Inequality~\cite{DubhashiP-book09}] \label{thm:hoeff}
        Let $Z_1, \cdots, Z_N$ be independent Bernoulli random variables with $\mathbb E[Z_i]=\pi_i$. Assume $\pi = \frac{1}{N} \sum^N_{i=1} \pi_i$. For $\forall \tau >0$, we have
        \begin{equation*}
          \mathbb P\left[ \left| \frac{1}{N} \sum^N_{i=1} Z_i - \pi \right| \geq \tau \right] \leq 2\exp\left( -2N\tau^2 \right).
        \end{equation*}
      \end{theorem}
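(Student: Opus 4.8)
The plan is to prove the one-sided upper-tail bound $\mathbb P[\frac{1}{N}\sum_{i=1}^N Z_i - \pi \geq \tau] \leq \exp(-2N\tau^2)$ first, and then recover the two-sided statement by symmetry plus a union bound. For the upper tail I would use the exponential (Chernoff) moment method. Writing the deviation event in terms of the centered sum and applying Markov's inequality to $\exp(s\sum_i (Z_i-\pi_i))$ for an arbitrary $s>0$ gives
\[
  \mathbb P\!\left[ \sum_{i=1}^N (Z_i - \pi_i) \geq N\tau \right] \leq \exp(-sN\tau)\, \mathbb E\!\left[ \exp\!\left( s\sum_{i=1}^N (Z_i - \pi_i) \right) \right].
\]
Since $Z_1,\dots,Z_N$ are independent by hypothesis, the moment generating function factorizes as $\prod_{i=1}^N \mathbb E[\exp(s(Z_i - \pi_i))]$, which reduces the whole problem to bounding the MGF of a single centered, bounded variable.

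The crux of the argument, and the step I expect to be the main obstacle, is controlling each factor $\mathbb E[\exp(s(Z_i - \pi_i))]$. Here I would invoke Hoeffding's lemma: because $Z_i$ is Bernoulli, the centered variable $Z_i - \pi_i$ has mean zero and is supported in an interval of length $b-a = 1$ (namely $[-\pi_i,\,1-\pi_i]$). Exploiting the convexity of $t \mapsto \exp(st)$ on this interval together with a second-order Taylor estimate of the associated log-MGF yields the uniform bound $\mathbb E[\exp(s(Z_i - \pi_i))] \leq \exp(s^2/8)$. Multiplying the $N$ factors then gives $\exp(Ns^2/8)$, so that
\[
  \mathbb P\!\left[ \sum_{i=1}^N (Z_i - \pi_i) \geq N\tau \right] \leq \exp\!\left( -sN\tau + \tfrac{N s^2}{8} \right).
\]

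Finally I would optimize the free parameter $s>0$. The exponent $-sN\tau + Ns^2/8$ is minimized at $s = 4\tau$, and substituting this value collapses it to $-2N\tau^2$, establishing the one-sided bound. Applying the identical argument to the variables $-Z_i$ (with means $-\pi_i$) controls the lower tail $\mathbb P[\frac{1}{N}\sum_i Z_i - \pi \leq -\tau]$ by the same quantity, and a union bound over the two tails produces the factor $2$ and completes the proof. The overall structure is completely standard; the only genuinely delicate ingredient is Hoeffding's lemma, whose convexity-based proof can simply be cited from the referenced text~\cite{DubhashiP-book09}.
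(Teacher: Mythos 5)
Your argument is the standard and correct proof of Hoeffding's inequality: the Chernoff/MGF method, Hoeffding's lemma applied to the centered Bernoulli variables supported on an interval of length $1$, optimization at $s=4\tau$ to get the exponent $-2N\tau^2$, and a union bound over the two tails for the factor of $2$. The paper itself gives no proof of this statement --- it is imported verbatim from the cited reference as a known tool --- so there is nothing to compare against; your write-up simply supplies the textbook derivation, and it is sound.
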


      \begin{theorem}[Bernoulli Inequality] \label{thm:berineq}
        Supposing $z>-1\neq 0$ is a real number and $n>1$ is an integer, we have $(1+z)^n > 1+nz$.
      \end{theorem}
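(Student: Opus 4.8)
The plan is to prove the inequality by induction on the integer exponent $n$, beginning the base case at $n = 2$, which is the smallest value permitted by the hypothesis $n > 1$.

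For the base case, I would simply expand the square: $(1+z)^2 = 1 + 2z + z^2$. Since the hypothesis guarantees $z \neq 0$, we have $z^2 > 0$, and therefore $(1+z)^2 = 1 + 2z + z^2 > 1 + 2z$, which is exactly the assertion for $n = 2$.

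For the inductive step, I would assume $(1+z)^n > 1 + nz$ holds for some integer $n \geq 2$ and derive the statement for $n+1$. Writing $(1+z)^{n+1} = (1+z)^n (1+z)$, I multiply both sides of the inductive hypothesis by the factor $(1+z)$. Here the hypothesis $z > -1$ is essential: it ensures $1 + z > 0$, so multiplying by it preserves the direction of the strict inequality, giving $(1+z)^{n+1} > (1 + nz)(1+z)$. Expanding the right-hand side yields $(1+nz)(1+z) = 1 + (n+1)z + nz^2$, and since $n \geq 2$ together with $z^2 > 0$ we have $nz^2 > 0$; discarding this strictly positive term gives $(1+z)^{n+1} > 1 + (n+1)z$, which closes the induction.

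The argument is elementary, so there is no genuine obstacle; the only points demanding care are the two places where the hypotheses enter: $z > -1$ is needed to keep the multiplier $1+z$ positive, without which the inequality direction could reverse during the inductive step, and $z \neq 0$ is needed to make $z^2$ strictly positive and thereby secure the strict (rather than merely weak) inequality at every stage.
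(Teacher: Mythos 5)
Your induction argument is correct: the base case at $n=2$ uses $z\neq 0$ to get $z^2>0$, and the inductive step correctly invokes $z>-1$ so that multiplication by $1+z>0$ preserves the strict inequality. The paper itself states this Bernoulli Inequality as a known basic fact without supplying any proof, so there is nothing to compare against; your write-up is the standard textbook derivation and fills that gap adequately.
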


      \begin{theorem} \label{thm:t33}
        Assuming $Z \geq 2$ is a positive integer, we have
        \begin{numcases}{}
          \left( 1 - \frac{2}{Z^2} \right) > \left( 1 - \frac{2}{Z^3-Z^2} \right)^Z \label{eq:t33-1} \\
          \left(1+\frac{2}{Z^3-Z^2-1}\right)^Z>\left(1+\frac{2}{Z^2-1}\right)
          \label{eq:t33-2}
        \end{numcases}
      \end{theorem}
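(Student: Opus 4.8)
The plan is to derive both inequalities from the Bernoulli inequality (Theorem~\ref{thm:berineq}), with \eqref{eq:t33-2} following by a direct application and \eqref{eq:t33-1} requiring a preliminary reciprocation so that Bernoulli can be applied in the correct direction. Throughout I would first record that for every integer $Z \ge 2$ all denominators appearing below are strictly positive: $Z^3 - Z^2 = Z^2(Z-1) \ge 4$, $Z^3 - Z^2 - 1 \ge 3$, $Z^3 - Z^2 - 2 \ge 2$, $Z^2 - 1 \ge 3$ and $Z^2 - 2 \ge 2$. This legitimizes the cross-multiplications and sign-preserving manipulations used afterward.

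For \eqref{eq:t33-2} I would apply Bernoulli with increment $z = \frac{2}{Z^3 - Z^2 - 1} > 0$ and exponent $n = Z \ge 2$, giving
\[
  \left(1 + \frac{2}{Z^3 - Z^2 - 1}\right)^{Z} > 1 + \frac{2Z}{Z^3 - Z^2 - 1}.
\]
It then suffices to show $\frac{2Z}{Z^3 - Z^2 - 1} \ge \frac{2}{Z^2 - 1}$. Cross-multiplying by the positive quantity $(Z^3 - Z^2 - 1)(Z^2 - 1)$ reduces this to $Z(Z^2 - 1) \ge Z^3 - Z^2 - 1$, i.e.\ $Z^2 - Z + 1 \ge 0$, which holds for all real $Z$ since its discriminant is negative. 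Chaining the strict Bernoulli step with this bound yields \eqref{eq:t33-2}.

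For \eqref{eq:t33-1} Bernoulli cannot be invoked directly, because the factor $1 - \frac{2}{Z^3 - Z^2}$ lies in $(0,1)$ and Bernoulli bounds such a power only from below, whereas here I need an upper bound on the right-hand side. The key step is therefore to pass to reciprocals: both sides of \eqref{eq:t33-1} are positive, so the claim is equivalent to $\bigl(1 - \frac{2}{Z^3 - Z^2}\bigr)^{-Z} > \frac{Z^2}{Z^2 - 2}$. Rewriting the base as $\bigl(1 - \frac{2}{Z^3-Z^2}\bigr)^{-1} = 1 + \frac{2}{Z^3 - Z^2 - 2}$, Bernoulli with the positive increment $z = \frac{2}{Z^3 - Z^2 - 2}$ gives
\[
  \left(1 + \frac{2}{Z^3-Z^2-2}\right)^{Z} > 1 + \frac{2Z}{Z^3 - Z^2 - 2},
\]
and it remains to verify $1 + \frac{2Z}{Z^3 - Z^2 - 2} \ge \frac{Z^2}{Z^2 - 2}$. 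Subtracting $1$ turns this into $\frac{2Z}{Z^3-Z^2-2} \ge \frac{2}{Z^2 - 2}$, and cross-multiplication collapses it to $Z(Z^2 - 2) \ge Z^3 - Z^2 - 2$, i.e.\ $(Z-1)^2 + 1 \ge 0$, which is trivially true.

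I expect the only genuinely non-mechanical step to be recognizing the reciprocation needed for \eqref{eq:t33-1}; once the power of a base below $1$ is converted into a power of a base above $1$, both parts reduce to a routine application of Bernoulli followed by the same polynomial simplification, and in each case the residual inequality is a shifted square that is manifestly nonnegative, so the strictness of the conclusions is inherited from the strict Bernoulli step.
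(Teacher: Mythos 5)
Your proof is correct. For \eqref{eq:t33-2} you do essentially what the paper does: apply Bernoulli to get $\bigl(1+\tfrac{2}{Z^3-Z^2-1}\bigr)^Z > 1+\tfrac{2Z}{Z^3-Z^2-1}$ and then compare the resulting rational expression with $1+\tfrac{2}{Z^2-1}$ (the paper does this by a short chain of denominator estimates, you do it by one cross-multiplication reducing to $Z^2-Z+1\ge 0$; same substance). For \eqref{eq:t33-1}, however, you take a genuinely different route. The paper keeps the base $1+z$ with $z=-\tfrac{2}{Z^3-Z^2}\in(-1,0)$, factors $1-(1+z)^Z=(-z)\sum_{k=0}^{Z-1}(1+z)^k$, lower-bounds each summand by Bernoulli ($(1+z)^k\ge 1+kz$), and sums the arithmetic series to get $1-(1+z)^Z>\tfrac{2}{Z^2}$. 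You instead observe that Bernoulli points the wrong way for a base in $(0,1)$, pass to reciprocals so the claim becomes $\bigl(1+\tfrac{2}{Z^3-Z^2-2}\bigr)^Z>\tfrac{Z^2}{Z^2-2}$, and then apply Bernoulli once more, reducing to $(Z-1)^2+1\ge 0$. Your reciprocation trick makes the two halves of the theorem structurally identical (one Bernoulli step plus one polynomial verification each), which is arguably cleaner and less error-prone; the paper's geometric-sum factorization avoids the reciprocation and works directly with the quantity $1-(1+z)^Z$, at the cost of an extra arithmetic-series computation. Both arguments are elementary, both correctly preserve strictness through the strict Bernoulli step, and your preliminary positivity checks on the denominators (in particular $Z^3-Z^2-2\ge 2$ and $Z^2-2\ge 2$ for $Z\ge 2$) are exactly what is needed to justify the reciprocation and the cross-multiplications.
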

      \begin{proof}
        Let $z=-\frac{2}{Z^3-Z^2}$, then $-1<z<0$ since $Z\ge2$. We then have
        \begin{flalign}
            &1-\left(1-\frac{2}{Z^3-Z^2}\right)^Z
            =1-(1+z)^Z\nonumber\\
            =&[ 1-(1+z)]\left[1+(1+z)+(1+z)^2+\cdots+(1+z)^{Z-1}\right]\nonumber\\
            >&(-z)\left[1+(1+z)+(1+2z)+\cdots+(1+(Z-1)z)\right]\nonumber\\
            =&(-z)\{Z+z[1+2+\cdots+(Z-1)]\}\nonumber\\
            =&\frac{2}{Z^2(Z-1)}\cdot\left(Z-\frac{1}{Z}\right)>\frac{2}{Z^2(Z-1)}\cdot(Z-1)=\frac{2}{Z^2} \nonumber
        \end{flalign}
        where the first inequality is due to the Bernoulli Inequality (see \textbf{Theorem}~\ref{thm:berineq}). (\ref{eq:t33-1}) can be obtained by rearranging the above inequality. Note that $\frac{2}{Z^3-Z^2-1}>0$ when $Z\ge2$. The inequality (\ref{eq:t33-2}) follows the Bernoulli Inequality directly:
        %
        %
        \begin{align*}
          \left(1+\frac{2}{Z^3-Z^2-1}\right)^Z>1+\frac{2Z}{Z^3-Z^2-1}>1+\frac{2}{Z^2-1}
        \end{align*}
      \end{proof}

      \begin{definition} \label{def:funcg}
        We denote by $g: \mathbb N^+ \rightarrow \mathbb R$ a function such that
        \begin{enumerate}[(i)]
          \item For any real number $\ell>0$, there exists a positive integer $N_1\geq 2$ such that $g(N) > \ell\ln N$ for $\forall N \geq N_1$.
          \item For any real number $\ell>0$, there exists a positive integer $N_2\geq 2$ such that $g(N) < \ell N$ for ~$\forall N \geq N_2$
        \end{enumerate}
      \end{definition}

      \begin{definition} \label{def:sim}
        Given real numbers $Z_1$, $Z_2$ and $c \geq 1$, the notation $Z_1 \overset{c}{\sim} Z_2$ denotes $\frac{1}{c} \leq \frac{Z_1}{Z_2} \leq c$. 
      \end{definition}

\section{Algorithm} \label{sec:alg}
  In this section, we present our privacy-preserving social learning algorithm in a general graph. Our algorithm proceeds iteratively in rounds and each agent performs the following four stages in each round. In \textbf{Stage 1}, each agent perturbs the option it adopted in the last round, for the purpose of privacy preserving. Then, the agents disseminate their perturbed adoptions over the network through MHRWs in \textbf{Stage 2}. Thereafter, in \textbf{Stage 3}, each agent may receive a number of perturbed adoptions from its peers, according to which, the agent selects one option as a candidate. The candidate option is then considered to be adopted or not in \textbf{Stage 4}, based on the most recent observation on its stochastic quality. In the following, we present the details of the four stages, respectively.

  \subsection{Stage 1: Perturbing} \label{ssec:pert}
    Let $\mathbf{X}^{r-1}_i = \left[ X^{r-1}_{i,1}, \cdots, X^{r-1}_{i,M} \right]$ be the adoption vector of agent $i$ in round $r-1$. If agent $i$ does not adopt any option in round $r-1$ such that $X^{r-1}_{i,j} = 0$ for $\forall j\in\mathcal M$, it does nothing to the adoption vector $\mathbf{X}^{r-1}_i$ (and thus the variables $\{X^{r-1}_{i,j}\}_{j\in\mathcal M}$) in round $r$; otherwise, $i$ perturbs its adoption vector $\mathbf{X}^{r-1}_i$ according to a perturbing mechanism $\widetilde{\mathbf X}^{r}_i = \mathcal F(\mathbf{X}^{r-1}_i)$, where $\mathbf{\widetilde X}^{r}_i = \left[ \widetilde{X}_{i,1}^{r}, \cdots,  \widetilde{X}_{i,M}^{r}\right]$ denote the perturbed adoption vector of agent $i$ in round $r$. Specifically, we design the perturbing mechanism $\mathcal F$ as follows
    \begin{equation}
    \label{eq:pert}
      \widetilde{X}_{i,j}^{r}=
      \begin{cases}
        X_{i,j}^{r-1} & \mbox{with probability $\frac{\exp(\varepsilon/2)}{\exp(\varepsilon/2)+1}$};\\
        1-X_{i,j}^{r-1} & \mbox{with probability $\frac{1}{\exp(\varepsilon/2)+1}$}.
      \end{cases}
    \end{equation}
    In another word, for any option $j$, each agent $i$ ``flips'' $X_{i,j}^{r-1}$ with probability $\frac{1}{\exp({\varepsilon/2})+1}$. We denote by $\widetilde{\mathcal X}^r = \{\mathbf{\widetilde X}^{r}_i\}_{i\in\mathcal N}$ the output of this stage, i.e., the set of the perturbed adoption vectors in round $r$.

  \subsection{Stage 2: Disseminating} \label{ssec:diss}
    In this stage, we employ MHRWs to disseminate the perturbed adoption vectors $\widetilde{\mathcal X}^r$ over graph $\mathcal G$. Each vector is associated with a \emph{length} variable indicating the maximum times it is forwarded in a random walk. The vector and its length variable (as well as some prerequisite information specified by specific communication protocols) are encapsulated in a data token. A token is said to be \emph{feasible} if it has a non-zero length variable.

    For any agent $i$ adopting some option in round $r-1$, it sets off $hg(N)$ MHRWs in parallel (where $h = \frac{16\sigma}{1-\beta}$ with $\sigma \geq 11$). Each random walk has a length of $\mathcal O(\log N)$. The agent $i$ uses a \textit{First-in-First-out} (FIFO) queue to buffer the tokens (with non-zero length indicators) received to forward next. In each slot of round $r$, $i$ pops the first (up to) $hg(N)$ feasible tokens out of the queue, and forwards each of the tokens to either one of its neighbors or itself according to the probability distribution $\Psi(i,i')$ (see Eq.~(\ref{eq:mhmat})). The lengths of the tokens are decreased by one before the forwarding. A token (and thus a perturbed adoption vector) is said to be ``sampled'' by a agent if it reaches the agent with the associated length variable being zero. Let $\mathcal V^r_i$ denote the set of the perturbed adoption vectors sampled by agent $i$ and $V^r_i = |\mathcal V^r_i|$ be the size of $\mathcal V^r_i$. Note that, an agent may receive multiple perturbed adoption vectors from the same one.

    \subsection{Stage 3: Sampling} \label{ssec:samp}
      In this stage, each agent $i$ selects an option to consider in the following adopting stage. Specifically, the agent $i$, with probability $\mu$, selects an option $j$ uniformly at random \footnote{We hereby use the probability $\mu$ to force the agents to ``explore'' the options, to prevent our algorithm from getting stuck in a local optimum. In practice, the parameter $\mu>0$ is usually small.}; with probability $1-\mu$, $i$ selects one of the options according to their \emph{normalized} popularity estimates $\{\widehat{Q}^{r}_{i,j}\}_{j\in\mathcal M}$. In particular, agent $i$ first estimates $j$'s (unnormalized) popularity $\widetilde Q^{r}_{i,j}$ according to the sampled perturbed adoption vectors $\mathcal V^{r}_i$ in round $r$ as
      %
      %
      \begin{equation}
      \label{eq:estpop}
        \widetilde{Q}_{i,j}^{r} =  \max\left\{ \frac{\exp\left({\varepsilon}/{2}\right)+1}{\exp\left({\varepsilon}/{2}\right)-1}\Lambda^r_{i,j}-\frac{1}{\exp\left({\varepsilon}/{2}\right)-1}, 0\right\},
      \end{equation}
      where $\Lambda^{r}_{i,j} = \sum_{\mathbf{\widetilde X}\in\mathcal V^r_i} [\mathbf{\widetilde X}]_j / V^{r}_i$ and $[\widetilde{\mathbf X}]_j$ denotes the $j$-th element of the vector $\widetilde{\mathbf X}\in\mathcal V^{r}_i$. $\widetilde{Q}_{i,j}^{r}$ is then normalized by 
      %
      \begin{equation} \label{eq:norm}
        \widehat{Q}^{r}_{i,j} = {\widetilde{Q}^{r}_{i,j}} \Big/ {\sum^m_{j'=1}  \widetilde{Q}^{r}_{i,j'} }
      \end{equation}
      
      such that $\widehat{Q}_{i,j}^{r} \geq 0~\text{and}~\sum^m_{j=1} \widehat{Q}_{i,j}^{r} = 1$.

    \subsection{Stage 4: Adopting} \label{ssec:adopt}
      Let $j^*$ be the option sampled by agent $i$ in the above stage. The agent $i$ then decides whether or not to adopt the option $j^*$ according to the following rule:
      \begin{equation}
      \label{eq:adoprule}
        X^r_{i,j^*} = \begin{cases}1~~\text{with probability}~\beta~\text{if}~\Phi^r_{j^*}=1;\\
        1~~\text{with probability}~1-\beta~\text{if}~\Phi^r_{j^*}=0 ;\\
        0~~\text{otherwise}.\end{cases}
      \end{equation}
      where $\beta > 1/2$ and is close to $1/2$.
      In particular, if observing the most recent quality signal $\Phi^r_{j^*}=1$, with probability $\beta$, the agent $i$ adopts the option $j^*$ such that $X^r_{i,j^*} = 1$ and $X^r_{i,j} = 0$ for $\forall j\neq j^*$, while with probability $1-\beta$, $i$ does not adopt any option such that $X^r_{i,j} = 0$ for $\forall j$. If $\Phi^r_{j^*}=0$ is observed, $i$ adopts $j^*$ with probability $1-\beta$ or adopts none of the options with probability $\beta$.

\section{Analysis} \label{sec:analysis}
  In this section, we present the details of our analysis on the algorithm. We first analyze our algorithm from the perspective of communication complexity in Sec.~\ref{ssec:complexity}. We then demonstrate how our algorithm has the regret function bounded in Sec.~\ref{ssec:converge} and finally discuss the privacy preservation of our algorithm in Sec.~\ref{ssec:prip}.

  \subsection{Communication Complexity} \label{ssec:complexity}
    It is demonstrated above that our algorithm entails very light-weight computations; therefore, we hereby concentrate on revealing the communication complexity in each round (i.e., the number of slots in each round for disseminating perturbed adoption vectors), while postponing the analysis on the number of rounds our algorithm takes to achieve convergence in Sec.~\ref{ssec:converge}. 

    As shown in Sec.~\ref{ssec:diss}, each agent launches $h g(N)$ MHRWs in each round in the disseminating stage, the question is, given that each MHRW entails $\mathcal O(\log N)$ steps to achieve a nearly uniform distribution, how many slots are necessitated in our case to ensure all MHRWs approach the nearly uniform distributions. Although this question has been (partially) answered in a quite different context in our previous work~\cite{YuanLYYWLC-ICDCS19}, we hereby provide a sketch of our specialized answer in \textbf{Theorem}~\ref{thm:mhrw}.
    \begin{theorem} \label{thm:mhrw}
      Consider a connected non-bipartite graph consisting of a sufficiently large number of agents such that $g(N) > \ln N$. When each agent sets off $h g(N)$ MHRWs, with probability at least $1-\frac{1}{N^{h/3}}$, all the MHRWs achieve a nearly uniform distribution $\left[ \frac{1}{N}-\frac{1}{N^3},\frac{1}{N}+\frac{1}{N^3}\right]$ within $\mathcal O(\log^2N)$ slots.
    \end{theorem}
    \begin{proof}
      According to \textbf{Stage 2: Disseminating} in Sec.~\ref{sec:alg}, for each agent $i$, the expected number of the tokens it receives from its neighbors in each slot is
      \begin{eqnarray*}
        && \sum_{i'\in\mathcal N_i} \min\left\{ \frac{1}{N_i}, \frac{1}{N_{i'}} \right\} \times h\cdot g(N) \\
        &=& \sum_{i'\in\mathcal N_i: \frac{1}{N_i} \geq \frac{1}{N_{i'}}} \frac{h\cdot g(N)}{N_{i'}} + \sum_{i'\in\mathcal N_i: \frac{1}{N_i} \leq \frac{1}{N_{i'}}} \frac{h\cdot g(N)}{N_{i}} \\
        &\leq& \sum_{i'\in\mathcal N_i: \frac{1}{N_i} \geq \frac{1}{N_{i'}}} \frac{h\cdot g(N)}{N_{i}} + \sum_{i'\in\mathcal N_i: \frac{1}{N_i} \leq \frac{1}{N_{i'}}} \frac{h\cdot g(N)}{N_{i}} \\
        &=& N_i \times \frac{h\cdot g(N)}{ N_i} = h\cdot g(N).
      \end{eqnarray*}
      By applying the Chernoff bound (see \textbf{Theorem}~\ref{thm:chbd}), agent $i$ receives at most $2h\cdot g(N)$ tokens in each slot with probability at least $1-\frac{1}{N^{h/3}}$, when $N$ is sufficiently large such that $g(N) > \ln N$. Furthermore, considering we employ a FIFO forwarding policy, the tokens agent $i$ receives in some slot $t$ can be delayed for at most $t$ additional slots. As a token should be forwarded for $\mathcal O(\log N)$ times to achieve a nearly uniform distribution, we conclude that with probability at least $1-N^{-h/3}$, all MHRWs in round $r$ approach a $\frac{1}{N^3}$-nearly uniform distribution in $\Delta = \mathcal O(\log^2 N)$ slots, according to what we have shown in Sec.~\ref{sssec:rw}.
    \end{proof}
    
    \begin{remark} \label{rem:complexity}
      It follows \textbf{Theorem}~\ref{thm:mhrw} that, in our disseminating stage, each agent needs to send $\mathcal O(g(N)\log^2N)$ messages, each of which has $\mathcal O(\log N)$ bits \footnote{More precisely, the length of the message should be $\mathcal O(M+\log N)$ (in bits). Nevertheless, throughout our analysis, we focus on investigating how large $N$ should be given fixed $M$. In this sense, we consider $M$ is a constant.}. Therefore, we conclude that our algorithm has a per-round communication complexity of $\mathcal O(g(N)\log^3 N)$ for each agent. According to our definition of $g(N)$ in \textbf{Definition}~\ref{def:funcg}, the complexity can be re-written (with a slight relaxation) as $\mathcal O(N\log^3 N)$.
    \end{remark}

  \subsection{Convergence} \label{ssec:converge}
    The challenges for analyzing the convergence of our algorithm are two-fold: on one hand, although our MHRW-based disseminating stage entails very efficient communications, each agent gets random (and thus incomplete) suggestions from its peers; on the other hand, the agents introduce random perturbations to their their private adoptions for the purpose of privacy preservation, such that the experience each agent learns from their peers is noisy. As \cite{CelisKV-PODC17} has shown the convergence of the learning dynamics by assuming each agent in round $r$ is aware of the actual popularities of all the options, i.e., $\{Q^{r-1}_{j}\}_{j\in\mathcal M}$, our focus is at demonstrating that, for each agent $i$ in round $r$, its estimate on $Q^{r-1}_{j}$, namely $\widehat Q^{r}_{i,j}$, sufficiently approximates $Q^{r-1}_{j}$ for $\forall j$. In the following, we first present our main results on the convergence of the regret function and then give the detailed proof.

  \subsubsection{Main Result} \label{ssec:mainres}
    \begin{theorem} \label{thm:main}
      Assume there are $M$ unknown options such that $\eta_1\ge\eta_2\ge\cdots\ge\eta_M$. Let $\frac{1}{2} < \beta < \frac{e}{e+1}$ and define $\delta =\ln\left( \frac{\beta}{1-\beta}\right)$ (hence $\delta>0$ and $\delta\to0$). Suppose $\varepsilon>0$, $6\mu \leq \delta^2$, $h=\frac{16\sigma}{1-\beta}$ (with $\sigma \geq 11$) and
      \begin{equation} \label{eq:constc}
        c=\frac{4M(2M+1)}{\mu(1-\beta)}\cdot\frac{\exp(\varepsilon/2)+1}{\exp(\varepsilon/2)-1}.
      \end{equation}
      When $N$ is sufficiently large such that
      \begin{eqnarray} \label{eq:population}
        \begin{cases}
        \frac{c^2M^{\frac{2\ln 5}{\delta^2}}}{\delta^2}\ln N<g(N)<N; \\
        N>\max\left\{ \frac{8h}{\sqrt{5}}+\sqrt{2}, \left(\frac{(10M+3)\ln M}{\delta^3}\right)^{\frac{1}{10}}\right\}
        \end{cases}
      \label{eq:conN}
      \end{eqnarray}
      for any $R$ such that $\frac{1}{\delta^2}{\ln \frac{4M}{\mu(1-\beta)}} \leq R \leq \frac{N^{10}\delta}{6M}$, we have
      \begin{equation} \label{eq:regdyn}
        \mathsf{Regret}_N(R):=\eta_1-\frac{1}{R}\sum_{r=1}^{R}\sum_{j=1}^{M}\mathbb{E}\left[ Q_j^{r-1}\Phi_j^r\right]\le 6\delta.  
      \end{equation}
    \end{theorem}

    \begin{remark} \label{re:main}
      In the above theorem, we give the answers to the fundamental questions proposed in Sec.~\ref{sec:intro}. It is revealed in \textbf{Theorem}~\ref{thm:main} that, when there are a sufficiently large number of agents engaged in our social learning process, there exists a constant upper bound on the regret of our algorithm with finite time horizon, even the experience shared by each agent is perturbed for the purpose of privacy preserving. In fact, \textbf{Theorem}~\ref{thm:main} implies a trade-off among the number of the agents (thus the communication overhead as shown in Sec.~\ref{ssec:complexity}), privacy preserving and learning utility. In particular, we could have higher learning utility (and thus smaller regret) while guaranteeing the local privacy for each agent, if more agents participate in the social learning process, resulting in higher communication overhead. Furthermore, given a certain number of agents participating in the social learning process, if there are more unknown options to learn or less privacy loss is allowed, we have to be content with a sacrifice in learning utility (and thus increased regret). We will perform extensive simulations to verify the trade-off later in Sec.~\ref{sec:exp}.
    \end{remark}

  \subsubsection{Detailed Proof} \label{ssec:deanaly}
  %

    As shown in Sec.~\ref{ssec:samp}, in round $r$, each agent $i$ selects an option as a candidate in the sampling stage according to $\mu$ and $\{\widehat{Q}^{r}_{i,j}\}_{j\in\mathcal M}$. Assuming $Y_{i,j}^{r} \in \{0,1\}$ is a random variable indicating if agent $i$ selects option $j$ in the sampling stage in round $r$, the probability of $Y_{i,j}^{r}=1$ conditioned on $\widehat Q^{r}_{i,j}$ can be defined as
    \begin{equation}
      \mathbb{P}\left[Y_{i,j}^{r}=1 \Big| \widehat{Q}_{i,j}^{r}\right] =(1-\mu)\widehat{Q}_{i,j}^{r}+\frac{\mu}{m}
    \end{equation}
    and we thus have
    \begin{equation}
      \mathbb{E}\left[Y_{i,j}^{r} \Big| \widehat{Q}_{i,j}^{r}\right] = (1-\mu)\widehat{Q}_{i,j}^{r}+\frac{\mu}{m}\ge\frac{\mu}{m}.
    \end{equation}
    Let $\mathcal{S}_j^{r} = \{ i\in\mathcal N \mid Y^r_{i,j} = 1 \}$ denote the set of the agents that select option $j$ in the sampling stage of round $r$ and $S_j^{r} = \sum^N_{i=1} Y_{i,j}^{r}$ be the size of $\mathcal S_j^{r}$. We calculate the conditional expectation of $S_j^{r}$ as follows
    \begin{align}
      \mathbb{E}\left[ S_j^{r} \big| \widehat{Q}_{i,j}^{r} \right] &= \mathbb{E}\left[ \sum_{i=1}^{N}Y_{i,j}^{r} \Bigg| \widehat{Q}_{i,j}^{r}\right]  = \sum\limits_{i=1}^{N} \left((1-\mu)\widehat{Q}_{i,j}^{r}+\frac{\mu}{m}\right) \nonumber\\
      &= \left((1-\mu)\widehat{Q}_{j}^{r}+\frac{\mu}{m}\right)N
      \geq \frac{\mu N}{m};
    \end{align}
    where $\widehat{Q}_j^{r} = \frac{1}{N} \sum^N_{i=1} \widehat{Q}^{r}_{i,j}$. Given $\widehat{Q}_j^{r}$, according to the Chernoff bound, the following lemma holds as a straightforward extension of \textbf{Proposition} 4.6 in \cite{CelisKV-PODC17}.
    \begin{lemma} \label{le:sq}
     In each round $r$, for each option $j$ , with probability at least $1-\frac{2M}{N^{10}}$ (conditioned on $\widehat Q^{r}_{j}$)
     \begin{equation} \label{eq:sq}
       S_j^{r}\overset{1+2\delta'}{\sim}\left( (1-\mu)\widehat{Q}_j^{r}+\frac{\mu}{M}\right)N;
     \end{equation}
     where $\delta'=\sqrt{\frac{30M\ln N}{\mu N}}\le\frac{1}{2}$. Moreover, for $\forall j\in\mathcal M$, $S_j^{r}\ge\frac{\mu N}{2M}$, 
     %
     %
     with probability at least $1-\frac{2m}{N^{10}}$.
    \end{lemma}

    As shown in Sec.~\ref{sec:alg}, the probability of $X_{i,j}^{r}=1$ (conditioned on $\mathcal{S}_j^{r}$ and $\Phi_j^{r}$) can be defined as
    \begin{equation}
      \mathbb{P}\left[ X_{i,j}^{r}=1 \Big| \mathcal{S}_j^{r},\Phi_j^{r}\right]=
      \begin{cases}
        \beta^{\Phi_j^{r}}(1-\beta)^{1-\Phi_j^{r}} & \mbox{if $i\in\mathcal{S}_j^{r}$};\\
        0 & \mbox{otherwise}.\nonumber
      \end{cases}
    \end{equation}
    Supposing $D_j^{r} = \sum^N_{i=1} X_{i,j}^{r}$ denotes the number of agents adopting option $j$ in round $r$, it follows that
    \begin{equation}
      \mathbb{E}\left[ D_j^{r}\Big| \mathcal{S}_j^{r},\Phi_j^{r}\right]=S_j^{r}\beta^{\Phi_j^{r}}(1-\beta)^{1-\Phi_j^{r}}.
    \end{equation}
    Especially, due to our assumption $\beta>\frac{1}{2}$, $\beta^{\Phi_j^{r}}(1-\beta)^{1-\Phi_j^{r}}\ge(1-\beta)$.
    It follows \cite{CelisKV-PODC17} again that
    \begin{lemma} \label{le:ds}
      In any round $r$, for any option $j$, with probability at least $1-\frac{4M}{N^{10}}$ (conditioned on $\mathcal{S}_j^{r},\Phi_j^{r}$ ),
      \begin{equation}
        D_j^{r}\overset{1+2\delta''}{\sim}S_j^{r}\beta^{\Phi_j^{r}}(1-\beta)^{1-\Phi_j^{r}};
      \end{equation}
      where $\delta''=\sqrt{\frac{60M\ln N}{\mu N}}\le\frac{1}{2}$.
    \end{lemma}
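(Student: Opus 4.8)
The plan is to mirror the proof of Lemma~\ref{le:sq} (following \cite{CelisKV-PODC17}): recognize $D_j^r$ as a sum of conditionally independent Bernoulli variables, apply the Chernoff bound of Theorem~\ref{thm:chbd}, and then translate the resulting additive concentration into the multiplicative $\overset{1+2\delta''}{\sim}$ notation.

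First I would freeze the conditioning on $\mathcal{S}_j^r$ and $\Phi_j^r$. Under this conditioning $S_j^r = |\mathcal{S}_j^r|$ is a fixed number, and by the adopting rule~(\ref{eq:adoprule}) each agent $i\in\mathcal{S}_j^r$ independently sets $X_{i,j}^r=1$ with probability $p := \beta^{\Phi_j^{r}}(1-\beta)^{1-\Phi_j^{r}}$, the coin flips being independent across agents. Hence $D_j^r=\sum_{i\in\mathcal{S}_j^r}X_{i,j}^r$ is a sum of $S_j^r$ i.i.d.\ $\mathsf{Bernoulli}(p)$ variables with conditional mean $\mathbb{E}[D_j^r\mid\mathcal{S}_j^r,\Phi_j^r]=S_j^r p$, and I would recall $p\ge 1-\beta$ since $\beta>\tfrac12$.

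Next I would apply Theorem~\ref{thm:chbd} with relative deviation $\tau=\delta''$, giving
\[
  \mathbb{P}\!\left[\,\bigl|D_j^r - S_j^r p\bigr| > S_j^r p\,\delta'' \;\Big|\; \mathcal{S}_j^r,\Phi_j^r\right]\le 2\exp\!\left(-\frac{S_j^r p\,(\delta'')^2}{3}\right).
\]
To control the exponent I would invoke the second part of Lemma~\ref{le:sq}, namely $S_j^r\ge\frac{\mu N}{2M}$; combined with $p\ge 1-\beta$ and the calibration $(\delta'')^2=\frac{60M\ln N}{\mu N}$, the exponent is $\Theta(\ln N)$, so the conditional concentration fails with probability at most $\mathcal{O}(N^{-10})$. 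A union bound with the failure probability of the event $\{S_j^r\ge\frac{\mu N}{2M}\}$ supplied by Lemma~\ref{le:sq} then produces the claimed confidence $1-\frac{4M}{N^{10}}$. Finally I would convert the symmetric bound $(1-\delta'')S_j^r p \le D_j^r \le (1+\delta'')S_j^r p$ into the ratio notation: the upper side is immediate from $1+\delta''\le 1+2\delta''$, while for the lower side I would use $\delta''\le\tfrac12$ to verify $(1-\delta'')(1+2\delta'')=1+\delta''-2(\delta'')^2\ge 1$, hence $1-\delta''\ge\frac{1}{1+2\delta''}$, yielding $D_j^r\overset{1+2\delta''}{\sim}S_j^r p$.

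The main obstacle is the bookkeeping of the conditioning: the clean exponent bound rests on $S_j^r\ge\frac{\mu N}{2M}$, which is itself only a high-probability event from Lemma~\ref{le:sq}, so one must carefully combine the \emph{conditional} Chernoff failure with the \emph{unconditional} failure of that event to arrive at $\frac{4M}{N^{10}}$. The other delicate (though routine) point is the passage from the symmetric additive deviation to the asymmetric multiplicative ratio $\overset{1+2\delta''}{\sim}$, which is precisely where the hypothesis $\delta''\le\tfrac12$ is consumed.
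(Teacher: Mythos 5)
Your overall strategy is exactly the intended one: the paper gives no proof of this lemma at all (it simply defers to \cite{CelisKV-PODC17}), and the argument there is precisely yours --- condition on $\mathcal{S}_j^r,\Phi_j^r$, recognize $D_j^r$ as a $\mathsf{Binomial}(S_j^r,p)$ sum with $p=\beta^{\Phi_j^r}(1-\beta)^{1-\Phi_j^r}\ge 1-\beta$, apply Theorem~\ref{thm:chbd}, and union-bound with the event $S_j^r\ge\frac{\mu N}{2M}$ supplied by Lemma~\ref{le:sq}. Your accounting of the failure probability as (Chernoff failure over the $M$ options) plus (failure of the $S_j^r$ lower bound), i.e.\ $\frac{2M}{N^{10}}+\frac{2M}{N^{10}}=\frac{4M}{N^{10}}$, is the right bookkeeping, and your passage from the symmetric additive deviation to $\overset{1+2\delta''}{\sim}$ via $(1-\delta'')(1+2\delta'')\ge 1$ for $\delta''\le\frac12$ is correct.

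The one step that does not close as written is ``the exponent is $\Theta(\ln N)$, so the conditional concentration fails with probability at most $\mathcal{O}(N^{-10})$.'' Plugging in your own bounds, the exponent is
\[
\frac{S_j^r\, p\,(\delta'')^2}{3}\;\ge\;\frac{1}{3}\cdot\frac{\mu N}{2M}\cdot(1-\beta)\cdot\frac{60M\ln N}{\mu N}\;=\;10(1-\beta)\ln N,
\]
so the conditional failure probability is $2N^{-10(1-\beta)}$ per option, not $2N^{-10}$; since the paper takes $\beta$ close to $\frac12$, this is only about $2N^{-5}$. To actually reach the stated $N^{-10}$ rate the factor $(1-\beta)$ must be absorbed into the calibration, e.g.\ $\delta''=\sqrt{\frac{60M\ln N}{\mu(1-\beta)N}}$, or else the confidence should read $1-\frac{4M}{N^{10(1-\beta)}}$. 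This is arguably a constant-tracking slip already present in the lemma statement itself, but your proof asserts the exponent $10$ without justifying it, and with the stated $\delta''$ that constant is not attainable; you should either flag the needed recalibration of $\delta''$ or weaken the claimed probability accordingly.
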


    \textbf{Lemma}~\ref{le:sq} and  \textbf{Lemma}~\ref{le:ds} characterize the relationship between $S^r_j$ and $\widehat Q^r_j$ and the one between $S^r_j$ and $D^r_j$ with the notation of ``$\sim$'', respectively. By combining them, we derive the relationship between $D_j^{r}$ and $\widehat{Q}_j^{r}$ in \textbf{Lemma}~\ref{le:dq2}.
    \begin{lemma} \label{le:dq2}
      In any round $r$, for any option $j$, with probability at least $1-\frac{6M}{N^{10}}$ (conditioned on $\Phi_j^{r}$ and $\widehat{Q}^{r}_j$ ),
      \begin{equation}
        D_j^{r}\overset{1+6\delta''}{\sim}\left((1-\mu)\widehat{Q}_j^{r}+\frac{\mu}{m}\right)N\beta^{\Phi_j^{r}}(1-\beta)^{1-\Phi_j^{r}}.
      \end{equation}
      %
    \end{lemma}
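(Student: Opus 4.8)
The plan is to obtain the stated relation by chaining the two multiplicative approximations of \textbf{Lemma}~\ref{le:sq} and \textbf{Lemma}~\ref{le:ds}, after intersecting their respective high-probability events by a union bound. First I would fix a round $r$ and an option $j$ and condition on $\Phi_j^r$ and $\widehat{Q}_j^r$ throughout. Abbreviate $A = \left((1-\mu)\widehat{Q}_j^r + \frac{\mu}{m}\right)N$ and $B = \beta^{\Phi_j^r}(1-\beta)^{1-\Phi_j^r} > 0$. \textbf{Lemma}~\ref{le:sq} gives $S_j^r \overset{1+2\delta'}{\sim} A$ on an event of probability at least $1 - \frac{2M}{N^{10}}$, and \textbf{Lemma}~\ref{le:ds} gives $D_j^r \overset{1+2\delta''}{\sim} S_j^r B$ on an event of probability at least $1 - \frac{4M}{N^{10}}$. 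By the union bound, both relations hold simultaneously with probability at least $1 - \frac{6M}{N^{10}}$, which is exactly the claimed confidence.

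On this intersection event I would simply multiply the two relations. Since $B>0$, scaling $S_j^r \overset{1+2\delta'}{\sim} A$ by $B$ preserves the factor, giving $S_j^r B \overset{1+2\delta'}{\sim} AB$; combining this with $D_j^r \overset{1+2\delta''}{\sim} S_j^r B$ and unfolding the definition of $\overset{c}{\sim}$ yields
\begin{equation*}
  \frac{AB}{(1+2\delta')(1+2\delta'')} \leq D_j^r \leq AB(1+2\delta')(1+2\delta''),
\end{equation*}
that is, $D_j^r \overset{(1+2\delta')(1+2\delta'')}{\sim} AB$. It then remains only to absorb the product of the two factors into the single factor $1+6\delta''$.

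The one step that requires a little care is therefore the factor bound $(1+2\delta')(1+2\delta'') \leq 1+6\delta''$, and this is where I would spend the attention. Here I exploit the two facts already on hand: $\delta'=\sqrt{\tfrac{30M\ln N}{\mu N}}\le\delta''=\sqrt{\tfrac{60M\ln N}{\mu N}}$ (indeed $\delta''=\sqrt{2}\,\delta'$) and $\delta''\le\frac12$. Expanding, $(1+2\delta')(1+2\delta'') = 1 + 2\delta' + 2\delta'' + 4\delta'\delta''$; bounding $2\delta' \leq 2\delta''$ and $4\delta'\delta'' \leq 4(\delta'')^2 \leq 2\delta''$ (the last inequality using $\delta''\le\frac12$) gives precisely $1+6\delta''$. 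Substituting $A$ and $B$ back in then completes the argument. I do not expect a genuine obstacle: the result is a clean composition of the two preceding lemmas, and the only thing to keep straight is the bookkeeping between the one-sided scaling by $B$ and the two-sided $\overset{c}{\sim}$ relations when they are multiplied.
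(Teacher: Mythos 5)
Your proposal is correct and follows essentially the same route as the paper: a union bound over the events of \textbf{Lemma}~\ref{le:sq} and \textbf{Lemma}~\ref{le:ds}, multiplication of the two $\overset{c}{\sim}$ relations, and the absorption $(1+2\delta')(1+2\delta'') \le 1+2\delta'+2\delta''+4\delta'\delta'' \le 1+6\delta''$ using $\delta'\le\delta''\le\frac{1}{2}$. Your write-up is in fact slightly more explicit than the paper's one-line proof about the union bound and the scaling by $\beta^{\Phi_j^r}(1-\beta)^{1-\Phi_j^r}$.
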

    \begin{proof}
      The proof follows directly from \textbf{Lemma}~\ref{le:sq} and \textbf{Lemma}~\ref{le:ds}, by noticing that $(1+2\delta')(1+2\delta'')\le1+2\delta'+2\delta''+4\delta'\delta''\le1+6\delta''$ as $\delta'\le\delta''\le\frac{1}{2}$.
    \end{proof}

    Let $D^r = \sum^M_{j=1} D_j^{r}$ denote the number of the agents with non-null adoptions. Based on the relationship between $D_j^{r}$ and $\widehat{Q}_j^{r}$ shown in \textbf{Lemma}~\ref{le:dq2}, we derive the lower bounds for $D^{r}$ and $Q_j^{r}={D_j^{r}}\big/{\sum_{k=1}^{M}D_k^{r}}$ in \textbf{Lemma}~\ref{le:qd} as follows.
    \begin{lemma} \label{le:qd}
      For any round $r$ and option $j$, with probability at least $1-\frac{6M}{N^{10}}$,
      \begin{equation}
        D^{r} \geq \frac{N(1-\beta)}{4} ~~\text{and}~~Q_j^{r}\ge\frac{\mu(1-\beta)}{4M}.
      \end{equation}
    \end{lemma}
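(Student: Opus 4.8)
The plan is to exploit the identity $Q_j^r = D_j^r/D^r$ underlying Eq.~(\ref{eq:frac}) and to establish the two claims by reducing everything to the concentration of $S_j^r$ and $D_j^r$ already proved in \textbf{Lemma}~\ref{le:sq} and \textbf{Lemma}~\ref{le:ds}. The only genuinely new ingredients are two elementary deterministic facts: each agent adopts at most one option, so $D^r=\sum_{i}\sum_{j}X_{i,j}^r\le N$; and each agent samples exactly one candidate in the sampling stage, so $\sum_{j}S_j^r=\sum_i\sum_j Y_{i,j}^r=N$. I will also use repeatedly that $\beta>\tfrac12$ forces $\beta^{\Phi_j^r}(1-\beta)^{1-\Phi_j^r}\ge 1-\beta$ regardless of the realized quality signal, and that $\delta''\le\tfrac12$ gives $\frac{1}{1+2\delta''}\ge\tfrac12$.

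For the popularity bound I would first lower-bound the numerator $D_j^r$. The second part of \textbf{Lemma}~\ref{le:sq} gives $S_j^r\ge \frac{\mu N}{2M}$, while \textbf{Lemma}~\ref{le:ds} gives $D_j^r\ge \frac{1}{1+2\delta''}S_j^r\beta^{\Phi_j^r}(1-\beta)^{1-\Phi_j^r}$. Combining these with the two uniform estimates above yields
\[
  D_j^r\ge \tfrac12\cdot\frac{\mu N}{2M}\cdot(1-\beta)=\frac{\mu N(1-\beta)}{4M}.
\]
Dividing by the trivial upper bound $D^r\le N$ immediately gives $Q_j^r\ge \frac{\mu(1-\beta)}{4M}$. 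The failure probability is governed by a union bound over just the two invoked events, namely $\frac{2M}{N^{10}}$ from \textbf{Lemma}~\ref{le:sq} and $\frac{4M}{N^{10}}$ from \textbf{Lemma}~\ref{le:ds}, totalling the claimed $\frac{6M}{N^{10}}$.

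For the aggregate bound on $D^r$ I would condition on the candidates chosen in the sampling stage and on the quality signals $\{\Phi_j^r\}$. Writing $Z_i:=\sum_{j}X_{i,j}^r\in\{0,1\}$ for the indicator that agent $i$ adopts some option, the variables $Z_1,\dots,Z_N$ are then independent Bernoulli variables, each with success probability at least $1-\beta$, so $\mathbb{E}[D^r\mid\cdot]=\sum_i\mathbb{E}[Z_i\mid\cdot]\ge(1-\beta)N$. Applying the Chernoff bound of \textbf{Theorem}~\ref{thm:chbd} with $\tau=\tfrac12$ gives $D^r\ge\tfrac12(1-\beta)N\ge\frac{N(1-\beta)}{4}$ with probability at least $1-2\exp\!\big(-\frac{(1-\beta)N}{12}\big)$, a term dominated by $\frac{6M}{N^{10}}$ for $N$ sufficiently large. (One can alternatively sum the per-option estimate $D_j^r\ge\tfrac12(1-\beta)S_j^r$ over $j$ and use $\sum_jS_j^r=N$, but the single-shot Chernoff on the total is the route that keeps the failure probability linear in $M$.)

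The main obstacle is not any individual inequality but the bookkeeping of the conditioning and of the failure events. I must invoke \textbf{Lemma}~\ref{le:sq} and \textbf{Lemma}~\ref{le:ds} consistently with their respective conditionings (on $\widehat{Q}_j^r$ and on $\mathcal{S}_j^r,\Phi_j^r$), verify that the independence required for the Chernoff step on $D^r$ genuinely holds once the candidates and signals are fixed, and ensure all per-option failure probabilities aggregate to the stated $\frac{6M}{N^{10}}$ rather than to a term scaling like $M^2/N^{10}$; it is precisely the single-shot Chernoff argument for $D^r$ that avoids an extra factor of $M$ from a naive union bound over options.
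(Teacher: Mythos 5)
Your proof is correct, and it diverges from the paper's in an instructive way. For the bound on $Q_j^{r}$ you and the paper do essentially the same thing: the paper first merges \textbf{Lemma}~\ref{le:sq} and \textbf{Lemma}~\ref{le:ds} into the intermediate \textbf{Lemma}~\ref{le:dq2} (getting $D_j^{r}\ge\frac{1}{1+6\delta''}((1-\mu)\widehat{Q}_j^{r}+\frac{\mu}{M})N\beta^{\Phi_j^{r}}(1-\beta)^{1-\Phi_j^{r}}\ge\frac{(1-\beta)N\mu}{4M}$) and then divides by $\sum_k D_k^{r}\le N$, exactly as you do by chaining $S_j^{r}\ge\frac{\mu N}{2M}$ with \textbf{Lemma}~\ref{le:ds} directly; the constants and the $\frac{6M}{N^{10}}$ failure budget come out identically. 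Where you genuinely differ is the bound on $D^{r}$: the paper sums the per-option lower bound from \textbf{Lemma}~\ref{le:dq2} over $j$ and exploits the normalization $\sum_{j}\widehat{Q}_j^{r}=1$ to land exactly on $\frac{(1-\beta)N}{4}$ without any new probabilistic event, whereas you run a fresh single-shot Chernoff argument on $D^{r}=\sum_i Z_i$ conditioned on the candidates and quality signals. Your conditioning is sound (the adoption decisions are indeed independent Bernoullis with means $\beta^{\Phi}(1-\beta)^{1-\Phi}\ge 1-\beta$ once candidates and signals are fixed), and your route buys an exponentially small failure probability for the $D^{r}$ part that does not scale with $M$ at all; its only cost is that it introduces an \emph{additional} failure event beyond the one already paid for in the $Q_j^{r}$ argument, so strictly your combined guarantee is $1-\frac{6M}{N^{10}}-2\exp(-\frac{(1-\beta)N}{12})$ rather than $1-\frac{6M}{N^{10}}$, while the paper's summation trick reuses the same event for both conclusions. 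This is an asymptotically negligible slackness and does not affect anything downstream, but if you want to match the stated constant exactly you should adopt the paper's summation over $j$ for the second claim.
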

    \begin{proof}
      According to \textbf{Lemma}~\ref{le:dq2}, with probability at least $1-\frac{6M}{N^{10}}$, we have
      \begin{eqnarray*}
        D_j^{r} &\ge& \frac{1}{1+6\delta''}\left((1-\mu)\widehat{Q}_j^{r}+\frac{\mu}{m}\right)N\beta^{\Phi_j^{r}}(1-\beta)^{1-\Phi_j^{r}} \\
        &\geq& \frac{(1-\beta)N}{4} \left((1-\mu)\widehat{Q}_j^{r}+\frac{\mu}{m}\right) \geq \frac{(1-\beta)N\mu}{4m}
      \end{eqnarray*}
      by considering $\delta'' \leq \frac{1}{2}$ and $\beta > \frac{1}{2}$. Moreover, as $\sum^M_{j=1} D^{r}_j \leq N$, $Q_j^{r}$ can be re-written as
      \begin{equation*}
        Q_j^{r}=\frac{D_j^{r}}{\sum_{k=1}^{M}D_k^{r}}\ge\frac{\frac{\mu(1-\beta)}{4M}N}{N}=\frac{\mu(1-\beta)}{4M}.
      \end{equation*}
      Similarly, with probability at least $1-\frac{6M}{N^{10}}$, we deduce that
	  \begin{eqnarray*}
	    D^{r} &\ge& \sum_{j=1}^{M} \frac{1}{1+6\delta''}\left((1-\mu)\widehat{Q}_j^{r}+\frac{\mu}{M}\right)N\beta^{\Phi_j^{r}}(1-\beta)^{1-\Phi_j^{r}}\\
	    &\ge& \sum_{j=1}^{M}\frac{1}{4}\left((1-\mu)\widehat{Q}_j^{r}+\frac{\mu}{M}\right)N(1-\beta)\\
	    &=& \frac{1}{4}N(1-\beta)\left((1-\mu)\sum_{j=1}^{M}\widehat{Q}_j^{r}+ \mu \right) = \frac{(1-\beta)N}{4},
	  \end{eqnarray*}
	  where the equality in the last step is due to the fact that $\sum^m_{j=1} \widehat{Q}^{r}_j = \frac{1}{N}\sum^N_{i=1}\sum^m_{j=1}{\widehat{Q}^{r}_{i,j}}=1$.
	  %
    \end{proof}

    Given $N$ fixed, \textbf{Lemma}~\ref{le:qd} indicates there are at least $\frac{N(1-\beta)}{4}$ agents which have non-null adoptions in each round $r-1$ and at least $\frac{\mu(1-\beta)}{4M}$ of them adopting option $j$ with high probability. Let ${\mathcal V}^{r}$ be the set of all perturbed adoption vectors disseminated in round $r$ \footnote{As shown in Sec.~\ref{ssec:diss}, ${\mathcal V}^{r}$ is formed by replicating each agent's perturbed adoption vector for $hg(N)$ times.} and ${\mathcal V}^r_i \subseteq {\mathcal V}^{r}$ be a subset which agent $i$ samples (or receives) in round $r$. We also suppose $V^r = D^rhg(N)$ and $V^r_i$ denotes the size of $\mathcal V^{r}$ and the one of ${\mathcal V}^r_i$, respectively. In \textbf{Lemma}~\ref{le:ubound}, we show the upper and lower bounds of $V^r_i$, with the help of \textbf{Theorem}~\ref{thm:mhrw} and \textbf{Lemma}~\ref{le:qd}.
    \begin{lemma} \label{le:ubound}
      For each agent $i$, when $N$ is sufficiently large such that $N \geq 4\ln N$, the number of the sampled perturbed adoption vectors in round $r$, i.e., $V^r_i$, satisfies
      \begin{equation} \label{eq:vecperround}
        \frac{3(1-\beta)}{32}hg(N)\le V^r_i\le \frac{15}{8}hg(N)
      \end{equation}
      with a probability at least $1 - \frac{6M+3}{N^{10}}$.
    \end{lemma}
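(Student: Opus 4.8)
The plan is to recognize $V^r_i$ as a sum of independent Bernoulli indicators, pin down its mean using the nearly-uniform mixing of the MHRWs together with the bounds on $D^r$ from \textbf{Lemma}~\ref{le:qd}, and then apply the Chernoff bound of \textbf{Theorem}~\ref{thm:chbd}.

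First I would condition on the high-probability event of \textbf{Lemma}~\ref{le:mhrw} that all $V^r = D^r h g(N)$ disseminated tokens complete their $\mathcal O(\log N)$ forwarding steps within the round, so that each token's terminal location is $\frac{1}{N^3}$-nearly uniform. For a fixed agent $i$, write $V^r_i = \sum_{k=1}^{V^r} Z_k$, where $Z_k$ indicates that the $k$-th token is sampled by $i$. As distinct random walks are independent, the $Z_k$ are independent Bernoulli variables, and by the nearly-uniform property each has mean in $\left[\frac{1}{N}-\frac{1}{N^3},\frac{1}{N}+\frac{1}{N^3}\right]$. Summing gives
\begin{equation*}
\left(\frac{1}{N}-\frac{1}{N^3}\right) D^r h g(N) \le \mathbb{E}[V^r_i] \le \left(\frac{1}{N}+\frac{1}{N^3}\right) D^r h g(N).
\end{equation*}

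Next I would substitute $\frac{N(1-\beta)}{4} \le D^r \le N$, where the lower bound holds with probability at least $1-\frac{6M}{N^{10}}$ by \textbf{Lemma}~\ref{le:qd} and the upper bound is trivial since each agent contributes at most one non-null adoption. This yields $\frac{1-\beta}{4}\left(1-\frac{1}{N^2}\right) h g(N) \le \mathbb{E}[V^r_i] \le \left(1+\frac{1}{N^2}\right) h g(N)$. Applying \textbf{Theorem}~\ref{thm:chbd} with $\tau=\frac{1}{2}$ gives $\mathbb{P}\left[\,|V^r_i-\mathbb{E}[V^r_i]| > \tfrac{1}{2}\mathbb{E}[V^r_i]\,\right] \le 2\exp\!\left(-\mathbb{E}[V^r_i]/12\right)$. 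On the complementary good event I would chase the constants: $V^r_i \ge \frac{1}{2}\mathbb{E}[V^r_i] \ge \frac{3(1-\beta)}{32} h g(N)$ and $V^r_i \le \frac{3}{2}\mathbb{E}[V^r_i] \le \frac{15}{8} h g(N)$, where both steps use $\frac{3}{4}\le 1-\frac{1}{N^2}$ and $1+\frac{1}{N^2}\le\frac{5}{4}$ (i.e. $N\ge 2$); this is exactly what produces the specific constants $\frac{3}{32}$ and $\frac{15}{8}$ in the statement.

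The main obstacle is driving the Chernoff failure probability down to $\frac{3}{N^{10}}$, so that a union bound with \textbf{Lemma}~\ref{le:qd} yields the advertised $1-\frac{6M+3}{N^{10}}$. For this I would exploit $h=\frac{16\sigma}{1-\beta}$ to rewrite the mean lower bound as $\mathbb{E}[V^r_i] \ge 4\sigma\left(1-\frac{1}{N^2}\right) g(N) \ge 3\sigma g(N)$, so the tail is at most $2\exp\!\left(-\sigma g(N)/4\right)$. Note that merely using $g(N)>\ln N$ with $\sigma\ge 11$ gives only about $N^{-2.75}$, which is \emph{not} enough; the delicate point is to invoke the growth property of $g$ in \textbf{Definition}~\ref{def:funcg}, which guarantees $g(N)>4\ln N$ for all sufficiently large $N$. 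Combined with $\sigma\ge 11$ this pushes the exponent above $11\ln N$, so the tail is below $2N^{-11}\le\frac{3}{N^{10}}$, and the union bound finishes the proof. I expect this final calibration—forcing $g(N)$ to dominate $\ln N$ by a large enough constant factor rather than merely exceeding it—to be the step that requires the most care.
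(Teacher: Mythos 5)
Your proposal is correct and follows essentially the same route as the paper: decompose $V^r_i$ into a sum of $D^r h g(N)$ Bernoulli indicators, condition on the near-uniform mixing event from \textbf{Lemma}~\ref{le:mhrw} and on $D^r \ge \frac{(1-\beta)N}{4}$ from \textbf{Lemma}~\ref{le:qd}, and apply the Chernoff bound; your constants $\frac{3(1-\beta)}{32}$ and $\frac{15}{8}$ arise exactly as in the paper. The only (immaterial) difference is that you take a fixed relative deviation $\tau=\frac{1}{2}$ while the paper uses $\tau=\sqrt{\ln N/g(N)}$; both rely on $g(N)\ge 4\ln N$ and $h=\frac{16\sigma}{1-\beta}$ to drive the tail below $\mathcal O(N^{-10})$, and you correctly flag that $g(N)>\ln N$ alone would not suffice.
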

    \begin{proof}
      Assume $Z^r_{i,k}\in\{0,1\}$ is a Bernoulli random variable indicating if the $k$-th token in round $r$ arrives at agent $i$ through the random walk. $V^r_i$ then can be represented as $V^r_i = \sum^{D^rhg(N)}_{k=1} Z^r_{i,k}$. Let $\gamma^r_{i,k} = \mathbb E[Z^r_{i,k}]$ and $\gamma^r_i = \frac{1}{D^rhg(N)} \sum^{D^rhg(N)}_{k=1} \gamma^r_{i,k}$. To facilitate our presentation, we defined the following two events $\mathsf{E_1}(i,r) = \left\{\frac{1}{N}-\frac{1}{N^3} \leq \gamma^r_{i,k} \leq \frac{1}{N}+\frac{1}{N^3}, \forall k \right\}$ and $\mathsf{E_2}(r) = \left\{D^r \geq \frac{(1-\beta)N}{4} \right\}$, and thus $\mathbb P[\mathsf{E_1}(i,r)] \geq 1- \frac{1}{N^{h/3}}$ (see \textbf{Theorem}~\ref{thm:mhrw}) and $\mathbb P[\mathsf{E_2}(r)] \geq 1-\frac{6M}{N^{10}}$ (see \textbf{Lemma}~\ref{le:qd}).
      According to the Chernoff-Hoeffding bound (see \textbf{Theorem}~\ref{thm:chbd}),
      \begin{align*}
        & \mathbb P\left[ \left| \frac{V^r_i}{D^rhg(N)} - \gamma^r_i \right| \geq \gamma^r_i \sqrt{\frac{\ln N}{g(N)}} ~\Bigg|~ \mathsf{E_1}(i,r), \mathsf{E_2}(r) \right] \\
        %
        %
        %
        &\leq 2\exp\left(-\frac{1}{12}(1-\frac{1}{N^2})(1-\beta)h\ln N\right) \leq \frac{2}{N^{\sigma}}
      \end{align*}
      %
      where we have the second inequality when the two events both hold and the last inequality due to the fact that $1-\frac{1}{N^2} \geq \frac{3}{4}$ when $N \geq 4\ln N$. Taking the union bound across $i\in\mathcal N$, we get 
      \begin{align*}
        &\mathbb P\left[ \left| \frac{V^r_i}{D^r h g(N)} - \gamma^r_i \right| \leq \gamma^r_i \sqrt{\frac{\ln N}{g(N)}}, \forall i\in\mathcal N \bigg|~ \mathsf{E_1}(i,r), \mathsf{E_2}(r) \right] \\
        & \geq 1-\frac{2}{N^{\sigma-1}}
      \end{align*}
      Since $\mathbb P[\mathsf{E_1}(i,r)] \geq 1- \frac{1}{N^{h/3}}$ and $\mathbb P[\mathsf{E_2}(r)] \geq 1-\frac{6M}{N^{10}}$,
      \begin{eqnarray*}
        \mathbb P\left[ \left| \frac{V^r_i}{D^r h g(N)} - \gamma^r_i \right| \leq \gamma^r_i \sqrt{\frac{\ln N}{g(N)}}, \forall i \right] > 1-\frac{6M+3}{N^{10}}
      \end{eqnarray*}
      When $N$ is sufficiently large such that $g(N) \geq 4 \ln N$, $\left| \frac{V^r_i}{D^r h g(N)} - \gamma^r_i \right| \leq \gamma^r_i \sqrt{\frac{\ln N}{g(N)}}$ can be re-written as the inequality (\ref{eq:vecperround}), by considering the facts $\frac{(1-\beta)N}{4} \leq D^r \leq N$ and $\gamma^r_{i,k} \in \left[\frac{1}{N}-\frac{1}{N^3}, \frac{1}{N}+\frac{1}{N^3} \right]$.
    \end{proof}

    Let ${\Lambda}^{r}_j = \sum_{\mathbf{\widetilde X} \in \mathcal{V}^{r}} [\mathbf{\widetilde X}]_j / V^r$ denote the the fraction of perturbed adoption vectors which indicate option $j$ is adopted in round $r$. We also assume that $\Lambda^{r}_{i,j} = \sum_{\mathbf{\widetilde X}\in\mathcal V^r_i} [\mathbf{\widetilde X}]_j / V^{r}_i$ denotes the fraction of the perturbed adoption vectors received by agent $i$ in round $r$, which indicate option $j$ is adopted. In another word, $\Lambda^{r}_{j}$ is the perturbed popularity of option $j$ in round $r$, while $\Lambda^{r}_{i,j}$ is an estimate of agent $i$ on $\Lambda^{r}_{j}$. Given the nearly uniform disseminating distribution (see \textbf{Theorem}~\ref{thm:mhrw}) and the range of $V^r_i$ (see \textbf{Lemma}~\ref{le:ubound}), we show the upper and lower bounds of ${\mathbb E\left[\Lambda^r_{i,j} \big| V^r_i, \widetilde{\mathcal X}^{r}\right]} \big{/} {\Lambda^{r}_j}$ in \textbf{Lemma}~\ref{le:hypergeo} and \textbf{Lemma}~\ref{le:xi01}.
    \begin{lemma}\label{le:hypergeo}
      Given the $\frac{1}{N^3}$-nearly uniform distribution in round $r$ (with probability at least $1-\frac{1}{N^{h/3}}$), the following inequality holds for $\forall i\in\mathcal N$ and $\forall j\in \mathcal M$
	    \begin{equation} \label{eq:hypergeo}
	      \xi_0 \leq \frac{\mathbb E\left[\Lambda^r_{i,j} \big| V^r_i, \widetilde{\mathcal X}^{r}\right]}{\Lambda^{r}_j} \leq \xi_1
	    \end{equation}
      where
      \begin{equation} \label{eq:xi01}
        \begin{cases}
          \xi_0 = \left(1-\frac{2}{N^2+1}\right)^{V^r_i} \left(1-\frac{2}{N^3-N^2+1}\right)^{V^r-V^r_i} \\
          \xi_1 = \left(1+\frac{2}{N^2-1}\right)^{V^r_i} \left(1+\frac{2}{N^3-N^2-1}\right)^{V^r-V^r_i}
        \end{cases}
      \end{equation}
    \end{lemma}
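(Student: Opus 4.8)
The plan is to fix the perturbed data $\widetilde{\mathcal X}^r$, which determines $\Lambda^r_j$ and, for the option $j$ under consideration, the set $S_1$ of the $|S_1| = V^r \Lambda^r_j$ tokens whose $j$-th coordinate equals $1$, and then to condition on $V^r_i = v$. Conditioned on the $\frac{1}{N^3}$-nearly uniform event of \textbf{Lemma}~\ref{le:mhrw}, the $V^r$ tokens reach agent $i$ through mutually independent random walks, so whether token $k$ arrives at $i$ is an independent Bernoulli trial with success probability $\gamma^r_{i,k} \in \left[\frac1N - \frac{1}{N^3}, \frac1N + \frac{1}{N^3}\right]$ (these are the same $\gamma^r_{i,k}$ used in the proof of \textbf{Lemma}~\ref{le:ubound}). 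Hence, conditioned on exactly $v$ of them arriving, the sampled set $\mathcal V^r_i$ is a size-$v$ subset $T$ of the $V^r$ tokens drawn from the conditional-Bernoulli law $\mathbb P[T] \propto \prod_{k\in T}\gamma^r_{i,k}\prod_{k\notin T}(1-\gamma^r_{i,k})$, and $\Lambda^r_{i,j}$ equals the nonnegative statistic $|T\cap S_1|/v$.

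The key step I would establish is a \emph{pairwise} comparison: for any two size-$v$ subsets $T,T'$, the ratio $\mathbb P[T]/\mathbb P[T']$ lies in $[\xi_0,\xi_1]$. Writing $\gamma^r_{i,k}\in\left[\frac{N^2-1}{N^3},\frac{N^2+1}{N^3}\right]$ and $1-\gamma^r_{i,k}\in\left[\frac{N^3-N^2-1}{N^3},\frac{N^3-N^2+1}{N^3}\right]$, every ``arrived'' factor in the ratio contributes an odds ratio between $\frac{N^2-1}{N^2+1}=1-\frac{2}{N^2+1}$ and $\frac{N^2+1}{N^2-1}=1+\frac{2}{N^2-1}$, and every ``absent'' factor contributes between $1-\frac{2}{N^3-N^2+1}$ and $1+\frac{2}{N^3-N^2-1}$. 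Collecting the $v$ arrived factors and the $V^r-v$ absent factors reproduces exactly the $\xi_0$ and $\xi_1$ defined in (\ref{eq:xi01}).

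From the pairwise bound I would pass to a comparison with the uniform law $\mathbb P_u[T]=1/\binom{V^r}{v}$: summing $\mathbb P[T']\ge \mathbb P[T]/\xi_1$ over all $\binom{V^r}{v}$ subsets and using $\sum_{T'}\mathbb P[T']=1$ yields $\mathbb P[T]\le \xi_1\,\mathbb P_u[T]$, and symmetrically $\mathbb P[T]\ge \xi_0\,\mathbb P_u[T]$, for every $T$. Since $\Lambda^r_{i,j}=|T\cap S_1|/v\ge 0$, averaging these per-subset inequalities against this statistic gives $\xi_0\,\mathbb E_u[\Lambda^r_{i,j}]\le \mathbb E[\Lambda^r_{i,j}\mid V^r_i,\widetilde{\mathcal X}^r]\le \xi_1\,\mathbb E_u[\Lambda^r_{i,j}]$. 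I would then invoke the hypergeometric mean under uniform sampling without replacement, namely $\mathbb E_u[|T\cap S_1|]=v\,|S_1|/V^r$ so that $\mathbb E_u[\Lambda^r_{i,j}]=|S_1|/V^r=\Lambda^r_j$, and divide through by $\Lambda^r_j$ to obtain (\ref{eq:hypergeo}).

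The main obstacle I anticipate is setting up the conditional-Bernoulli characterization of the sampled set and the pairwise ratio bound cleanly; once $\xi_0\le \mathbb P[T]/\mathbb P[T']\le \xi_1$ is in hand, the reduction to the uniform (hypergeometric) law and the final averaging are routine. A minor point to dispatch is the degenerate case $\Lambda^r_j=0$, where both sides of (\ref{eq:hypergeo}) vanish and the claim holds trivially; and, as the statement already records, the whole argument is conditioned on the near-uniform event, which is why the bound is asserted to hold with probability at least $1-\frac{1}{N^{h/3}}$.
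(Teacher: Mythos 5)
Your proposal is correct and follows essentially the same route as the paper: both condition on the $\frac{1}{N}\pm\frac{1}{N^3}$ per-token arrival probabilities, compare the law of the sampled set to uniform size-$v$ sampling without replacement, and finish with the hypergeometric mean $\mathbb E_u\left[|T\cap S_1|\right]=v|S_1|/V^r$, arriving at exactly the same slack factors (the paper's $\xi'_0/\xi'_1$ and $\xi'_1/\xi'_0$ coincide with your $\xi_0$ and $\xi_1$). Your pairwise ratio comparison between equal-size subsets is just a cleaner organization of the paper's step that bounds $\mathbb P(B^r_{i,j}=b^r_{i,j}, V^r_i=v^r_i\mid\mathcal V^r)$ and $\mathbb P(V^r_i=v^r_i\mid\mathcal V^r)$ separately and then divides, and it also makes explicit the token-independence assumption and the degenerate case $\Lambda^r_j=0$ that the paper leaves implicit.
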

    \begin{proof}
      Considering the nearly uniform distribution (with probability at least $1-\frac{1}{N^{h/3}}$), conditioned on ${\mathcal V}^{r}$, the probability that agent $i$ samples $V^r_i = v^r_i$ tokens in round $r$, $\mathbb P(V^r_i=V \mid \widetilde{\mathcal X}^{r})$, can be bounded by
      %
      %
      \begin{equation*}
        \dbinom{V^r}{v^r_i} \xi'_0 \leq \mathbb P(V^r_i = v^r_i \mid {\mathcal V}^{r}) \leq \dbinom{V^r}{v^r_i} \xi'_1
      \end{equation*}
      where
      \begin{eqnarray*}
        \begin{cases}
          \xi'_0 = \left(\frac{1}{N}-\frac{1}{N^3}\right)^{v^r_i}\left(1-\frac{1}{N}-\frac{1}{N^3}\right)^{V^r-v^r_i}\\
          \xi'_1 = \left(\frac{1}{N}+\frac{1}{N^3}\right)^{v^r_i}\left(1-\frac{1}{N}+\frac{1}{N^3}\right)^{V^r-v^r_i}
        \end{cases}
      \end{eqnarray*}
	    Furthermore, letting $B^r_{i,j} = \Lambda^r_{i,j} V^r_i = \sum_{\mathbf{\widetilde X}\in \mathcal V^r_i}[\mathbf{\widetilde X}]_j$ denote the number of the perturbed adoption vectors sampled by agent $i$ in round $r$ with the $j$-th component being $1$, we then have $\mathbb P(B^r_{i,j}=b^r_{i,j}, V^r_i=v^r_i \mid {\mathcal V}^{r})$ lie in the range of
      \begin{eqnarray*}
        \left[ \dbinom{\Lambda^r_j{V}^{r}}{b^r_{i,j}}\dbinom{(1-\Lambda^r_j){V}^{r}}{v^r_i-b^r_{i,j}}\xi'_0, ~~\dbinom{\Lambda^r_j{V}^{r}}{b^r_{i,j}}\dbinom{(1-\Lambda^r_j){V}^{r}}{v^r_i-b^r_{i,j}}\xi'_1 \right]
      \end{eqnarray*}
	    by considering the nearly uniform distribution resulting from the random walk-based dissemination again. Moreover, due to
	    \[\mathbb P(B^r_{i,j}=b^r_{i,j} \mid V^r_i=v^r_i,\mathcal V^{r}) = \frac{\mathbb P(B^r_{i,j}=b^r_{i,j}, V^r_i=v^r_i \mid \mathcal V^{r})}{\mathbb P(V^r_i=v^r_i \mid \mathcal V^r)}\]
	    we have $\zeta_0 \leq \mathbb P(B^r_{i,j}=b^r_{i,j} \mid V^r_i=v^r_i,\mathcal V^{r}) \leq \zeta_1$ where
	    \begin{eqnarray*}
	    \begin{cases}
	    \zeta_0 = \frac{\xi'_0}{\xi'_1} \cdot {\dbinom{\Lambda^r_j{V}^{r}}{b^r_{i,j}}\dbinom{(1-\Lambda^r_j){V}^{r}}{v^r_i-b^r_{i,j}}}{\dbinom{V^r}{v^r_i}}^{-1} \\
	    \zeta_1 = \frac{\xi'_1}{\xi'_0} \cdot {\dbinom{\Lambda^r_j{V}^{r}}{b^r_{i,j}}\dbinom{(1-\Lambda^r_j){V}^{r}}{v^r_i-b^r_{i,j}}}{\dbinom{V^r}{v^r_i}}^{-1} \\
	    \end{cases}
	    \end{eqnarray*}
	    Therefore, we have $\mathbb E\left[B^r_{i,j} \big| V^r_i, \widetilde{\mathcal X}^{r}\right]$ bounded by $\frac{\xi'_0}{\xi'_1} \cdot \Lambda^{r}_j V^r_i \leq \mathbb E\left[B^r_{i,j} \mid V^r_i, {\mathcal V}^{r}\right] \leq \frac{\xi'_1}{\xi'_0} \cdot \Lambda^{r}_j V^r_i$ and finally complete the proof by considering the fact that $\mathbb E\left[\Lambda^r_{i,j} \mid V^r_i, {\mathcal V}^{r}\right] = {\mathbb E\left[B^r_{i,j} \mid V^r_i, {\mathcal V}^{r}\right]}/{V^r_i}$.
    \end{proof}

    \begin{lemma} \label{le:xi01}
      Let $N$ be sufficiently large such that $g(N)<N$, $(N - 1)\sqrt{\frac{\ln N}{N}}\ge\frac{8h}{\sqrt{5}}$, and $\left(N-\frac{3}{2}\right)\cdot\ln \left(1+\sqrt{\frac{5\ln N}{N}}\right) \ge 8h$. We have the following two inequalities hold in any round $r$ with probability at least $1-\frac{6M+3}{N^{10}}$,
      \begin{eqnarray}
        \xi_0 \ge 1-\sqrt{\frac{5\ln N}{N}} ~\text{and}~ \xi_1 \le 1 + \sqrt{\frac{5\ln N}{N}}
      \end{eqnarray}
    \end{lemma}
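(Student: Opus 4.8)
The plan is to collapse each of the two-factor products in \eqref{eq:xi01} into a single power by invoking Theorem~\ref{thm:t33}, whose two inequalities were tailored precisely to the bases occurring in $\xi_0$ and $\xi_1$, and then to tame the resulting single power with the Bernoulli inequality (Theorem~\ref{thm:berineq}) and the elementary estimate $1+x\le e^x$. Throughout I would work on the event of Lemma~\ref{le:ubound}, which holds with probability at least $1-\frac{6M+3}{N^{10}}$ and supplies the upper bound $V^r_i\le\frac{15}{8}hg(N)$. Combined with the deterministic facts $V^r=D^r hg(N)\le Nhg(N)$ (since $D^r\le N$) and $g(N)<N$, these yield the single uniform estimate $(N-1)V^r_i+V^r\le\frac{23}{8}hN^2$, which drives both sides of the statement; note only the upper bound on $V^r_i$ is needed, because in each case I will have to bound the exponent $(N-1)V^r_i+V^r$ from above.

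For the lower bound on $\xi_0$, I would first pass from the ``$+1$'' bases to the cleaner ones by monotonicity, using $1-\frac{2}{N^2+1}>1-\frac{2}{N^2}>0$ and $1-\frac{2}{N^3-N^2+1}>1-\frac{2}{N^3-N^2}$, so that $\xi_0>(1-\frac{2}{N^2})^{V^r_i}(1-\frac{2}{N^3-N^2})^{V^r-V^r_i}$. Applying \eqref{eq:t33-1} with $Z=N$ to absorb the first factor into the second gives $\xi_0>(1-\frac{2}{N^3-N^2})^{(N-1)V^r_i+V^r}$, and a single use of the Bernoulli inequality turns this into $\xi_0>1-\frac{2[(N-1)V^r_i+V^r]}{N^3-N^2}>1-\frac{23h}{4(N-1)}$. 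The hypothesis $(N-1)\sqrt{\ln N/N}\ge 8h/\sqrt5$, which rearranges to $\sqrt{5\ln N/N}\ge\frac{8h}{N-1}$, then dominates $\frac{23h}{4(N-1)}$ (as $\frac{23}{4}<8$) and closes the bound.

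For the upper bound on $\xi_1$ the bases already match \eqref{eq:t33-2} exactly, so I would directly use $(1+\frac{2}{N^3-N^2-1})^N>1+\frac{2}{N^2-1}$ to collapse $\xi_1<(1+\frac{2}{N^3-N^2-1})^{(N-1)V^r_i+V^r}$. Then $1+x\le e^x$ together with the same exponent estimate gives $\xi_1\le\exp\!\big(\frac{2[(N-1)V^r_i+V^r]}{N^3-N^2-1}\big)$, and the bound $(N-1)V^r_i+V^r\le\frac{23}{8}hN^2$ shows this exponent is at most $\frac{8h}{N-3/2}$ for all large $N$. Finally the hypothesis $(N-\frac32)\ln(1+\sqrt{5\ln N/N})\ge 8h$ rearranges to $\ln(1+\sqrt{5\ln N/N})\ge\frac{8h}{N-3/2}$, which bounds the exponent above by $\ln(1+\sqrt{5\ln N/N})$ and hence yields $\xi_1\le 1+\sqrt{5\ln N/N}$.

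The routine but delicate part---and the main place to be careful---is the constant bookkeeping: one must verify that the slack factor $8h$ in both hypotheses genuinely dominates the true coefficient $\frac{23}{4}h$ coming from $(N-1)V^r_i+V^r\le\frac{23}{8}hN^2$, and that the denominators $N^3-N^2$ and $N^3-N^2-1$ may be replaced by $N^2(N-1)$-type quantities (equivalently, that $\frac{23}{4}N^2(N-\frac32)\le 8(N^3-N^2-1)$) without breaking the inequality once $N$ is large enough. Since $h$ and $M$ are constants while $\sqrt{\ln N/N}$ decays only polylogarithmically slower than $1/N$, there is ample room, so no restriction on $N$ beyond the three stated hypotheses is required.
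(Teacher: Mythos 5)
Your proposal is correct and follows essentially the same route as the paper's proof: work on the event $V^r_i\le\frac{15}{8}hg(N)$ from Lemma~\ref{le:ubound}, collapse each two-factor product into a single power via Theorem~\ref{thm:t33}, bound the exponent by $\mathcal O(hN^2)$ using $g(N)<N$ and $D^r\le N$, and finish with the Bernoulli inequality for $\xi_0$ and the exponential bound for $\xi_1$ together with the two stated hypotheses. The only difference is cosmetic bookkeeping (you keep the exponent as $(N-1)V^r_i+V^r\le\frac{23}{8}hN^2$ where the paper rounds up to $4hN^2$), and both comfortably fit under the $8h$ slack.
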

    \begin{proof}
      Recalling $V^r_i\le \frac{15hg(N)}{8}$ with probability at least $1-\frac{6M+3}{N^{10}}$ (as shown in \textbf{Lemma \ref{le:ubound}}), we have
      \begin{eqnarray*}
        \xi_0 &\ge& \left(1-\frac{2}{N^2}\right)^{V^r_i}\left(1-\frac{2}{N^3-N^2}\right)^{V^r} \nonumber\\
        &\ge& \left(1-\frac{2}{N^2}\right)^{2hg(N)}\left(1-\frac{2}{N^3-N^2}\right)^{N\cdot2hg(N)} \nonumber\\
        &\ge& \left(1-\frac{2}{N^3-N^2}\right)^{4hNg(N)} \ge \left(1-\frac{2}{N^3-N^2}\right)^{4hN^2} \nonumber\\
        &\ge& 1 - \frac{8h}{N-1}
      \end{eqnarray*}
      where we have the third inequality by applying the inequality (\ref{eq:t33-1}) (shown in \textbf{Theorem}~\ref{thm:t33}), the forth one due to the fact that $g(N) < N$, and the last one by applying the Bernoulli inequality (see \textbf{Theorem}~\ref{thm:berineq}). Furthermore, since $(N - 1)\sqrt{\frac{\ln N}{N}} \geq \frac{8h}{\sqrt{5}}$, we have $\xi_0 \ge 1-\sqrt{\frac{5\ln N}{N}}$ holds with probability at least $1-\frac{6M+2}{N^{10}}$.

      Likewise, when $V^r_i\le \frac{15hg(N)}{8}$ holds, we deduce that
      \begin{eqnarray}
        \xi_1 &\le& \left(1+\frac{2}{N^2-1}\right)^{V_i^r}\left(1+\frac{2}{N^3-N^2-1}\right)^{V^r} \nonumber\\
        %
        %
        &\le& \left(1+\frac{2}{N^2-1}\right)^{2hN}\left(1+\frac{2}{N^3-N^2-1}\right)^{2hN^2} \nonumber\\
        &\le& \left(1+\frac{2}{N^3-N^2-1}\right)^{4hN^2} \le \exp\left( \frac{8h}{N-{3}/{2}} \right) \nonumber
      \end{eqnarray}
      where we have the last inequality due to the fact that $\left(1+\frac{1}{N}\right)^N\le e$ holds for any positive integer $N$. Furthermore, $\xi_1 - 1 \leq \exp\left( \frac{8h}{N-{3}/{2}} \right) - 1 \leq \sqrt{\frac{5\ln N}{N}}$, since $\left(N-\frac{3}{2}\right)\cdot\ln \left(1+\sqrt{\frac{5\ln N}{N}}\right) \geq 8h $.
    \end{proof}

    It is shown in the above two lemmas that , through the disseminating stage of our algorithm, each agent $i$ can accurately estimate the perturbed popularity of any option $j$ in each round $r$. In the following \textbf{Lemma}~\ref{le:hatq}, we demonstrate that agent $i$ also can accurately estimate the actual popularity of each option $j$ in round $r$, by showing the absolute difference between $\widehat{Q}^r_j$ and $Q^{r-1}_j$ is bounded.
    \begin{lemma}\label{le:hatq}
      Conditioned on the adoptions $\mathcal X^{r-1} = \{\mathbf{X}^{r-1}_i\}_{i\in\mathcal N}$ in round $r-1$, for each option $j\in\mathcal M$ in round $r$, with probability at least $1-\frac{10M+3}{N^{10}}$, 
      \begin{equation}\label{eq:hatq}
        \big|\widehat{Q}_j^{r}-Q_j^{r-1}\big| \le 6(2M+1)\sqrt{\frac{\ln N}{g(N)}}\frac{\exp(\varepsilon/2)+1}{\exp(\varepsilon/2)-1}
      \end{equation}
      where $\widehat{Q}^{r}_j = \frac{1}{N} \sum^N_{i=1} \widehat{Q}^{r}_{i,j}$.
      %
      %
    \end{lemma}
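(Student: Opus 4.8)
The plan is to bound the per-agent, per-option error $|\widehat{Q}^r_{i,j} - Q^{r-1}_j|$ uniformly over all $i \in \mathcal{N}$ and $j \in \mathcal{M}$, and then pass to the average $\widehat{Q}^r_j = \frac1N\sum_i \widehat{Q}^r_{i,j}$ by the triangle inequality, so that no independence across agents is needed and the averaging merely preserves the uniform bound. The backbone of the argument is that the debiasing map in (\ref{eq:estpop}) inverts the randomized-response perturbation (\ref{eq:pert}): writing $p = \frac{\exp(\varepsilon/2)}{\exp(\varepsilon/2)+1}$ and $q = \frac{1}{\exp(\varepsilon/2)+1}$, each perturbed bit satisfies $\mathbb{E}[\widetilde X^r_{i,j}\mid \mathcal X^{r-1}] = q + (p-q)X^{r-1}_{i,j}$, hence $\mathbb{E}[\Lambda^r_j\mid\mathcal X^{r-1}] = q + (p-q)Q^{r-1}_j$, and, dropping the harmless truncation at $0$, the unnormalized estimate obeys $\widetilde Q^r_{i,j} - Q^{r-1}_j = \frac{1}{p-q}\bigl(\Lambda^r_{i,j} - \mathbb{E}[\Lambda^r_j\mid\mathcal X^{r-1}]\bigr)$, which already exposes the amplification factor $\frac{1}{p-q} = \frac{\exp(\varepsilon/2)+1}{\exp(\varepsilon/2)-1}$ appearing in (\ref{eq:hatq}).

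First I would split $\Lambda^r_{i,j} - \mathbb{E}[\Lambda^r_j\mid\mathcal X^{r-1}]$ into three pieces: the sampling fluctuation $\Lambda^r_{i,j} - \mathbb{E}[\Lambda^r_{i,j}\mid V^r_i,\widetilde{\mathcal X}^r]$, the sampling bias $\mathbb{E}[\Lambda^r_{i,j}\mid V^r_i,\widetilde{\mathcal X}^r] - \Lambda^r_j$, and the perturbation fluctuation $\Lambda^r_j - \mathbb{E}[\Lambda^r_j\mid\mathcal X^{r-1}]$. For the first piece I would apply the Chernoff bound (Theorem \ref{thm:chbd}) conditioned on the walk reaching its nearly-uniform distribution, using the lower bound $V^r_i \ge \frac{3(1-\beta)}{32}hg(N)$ from Lemma \ref{le:ubound} to obtain a deviation of order $\sqrt{\ln N / g(N)}$. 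The second piece is exactly controlled by Lemma \ref{le:hypergeo} together with the subsequent $\xi_0,\xi_1$ estimates, giving $|\mathbb{E}[\Lambda^r_{i,j}\mid V^r_i,\widetilde{\mathcal X}^r] - \Lambda^r_j| \le \sqrt{5\ln N/N}\,\Lambda^r_j$. The third piece I would control by Hoeffding (Theorem \ref{thm:hoeff}) over the $\Theta(N)$ independent perturbations of the round-$(r-1)$ adoptions, whose number $D^{r-1} \ge (1-\beta)N/4$ is lower-bounded via Lemma \ref{le:qd}, yielding a deviation of order $\sqrt{\ln N/N}$. Since $g(N) < N$ by (\ref{eq:conN}), both $\sqrt{\ln N/N}$-terms are dominated by $\sqrt{\ln N/g(N)}$, so a union bound collapses all three into a single estimate $|\widetilde Q^r_{i,j} - Q^{r-1}_j| = O\!\bigl(\frac{1}{p-q}\sqrt{\ln N/g(N)}\bigr)$; the truncation at $0$ can only decrease the error since $Q^{r-1}_j \ge 0$.

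It remains to pass from the unnormalized $\widetilde Q^r_{i,j}$ to the normalized $\widehat Q^r_{i,j}$ of (\ref{eq:norm}). Writing $\widetilde Q^r_{i,j} = Q^{r-1}_j + e_j$ with $|e_j|$ bounded as above, the normalizing denominator is $\sum_{j'}\widetilde Q^r_{i,j'} = 1 + \sum_{j'} e_{j'}$, whose deviation from $1$ is at most $M$ times the per-option error because $\sum_{j'} Q^{r-1}_{j'} = 1$; a short computation then gives $|\widehat Q^r_{i,j} - Q^{r-1}_j| \le \frac{|e_j| + Q^{r-1}_j|\sum_{j'}e_{j'}|}{1 - M\,\max_{j'}|e_{j'}|}$, and inflating the per-option error by the resulting factor linear in $M$ produces the stated constant $6(2M+1)$. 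Finally I would assemble the failure probabilities — the $(6M+3)/N^{10}$ from Lemma \ref{le:ubound}, the $1/N^{h/3}$ nearly-uniform event, and the extra $\sim 4M/N^{10}$ from the perturbation- and sampling-concentration steps over all $M$ options — into the overall $1 - (10M+3)/N^{10}$ guarantee. The main obstacle I anticipate is the bookkeeping of this combination: the sampling-bias bound from Lemma \ref{le:hypergeo} is two-sided and \emph{multiplicative} in $\Lambda^r_j$, whereas the two concentration bounds are additive, and both must be threaded through the subtraction-and-division of the debiasing and then through the normalization without letting the amplification factor $\frac{1}{p-q}$ or the linear-in-$M$ normalization penalty inflate the constant beyond $6(2M+1)$.
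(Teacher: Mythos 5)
Your proposal follows essentially the same route as the paper's proof: the same three-term decomposition of $\Lambda^r_{i,j}-\mathbb E[\Lambda^r_j\mid\mathcal X^{r-1}]$ into sampling fluctuation (Chernoff plus the lower bound on $V^r_i$ from Lemma~\ref{le:ubound}), sampling bias (Lemma~\ref{le:hypergeo} with the $\xi_0,\xi_1$ estimates), and perturbation fluctuation (Hoeffding over the round-$(r-1)$ adopters), followed by the same debiasing amplification by $\frac{\exp(\varepsilon/2)+1}{\exp(\varepsilon/2)-1}$, the same normalization argument costing a factor of order $M$, and the same per-agent-then-average triangle-inequality step. Your explicit remark that the truncation at $0$ in (\ref{eq:estpop}) can only help is a small point the paper leaves implicit, but otherwise the two arguments coincide.
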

    \begin{proof}
      According to perturbation process shown in Sec.~\ref{ssec:pert}, the Bernoulli probability for any ${\mathbf X}\in\mathcal X^{r-1}$ to have $[\widetilde{\mathbf X}]_j=1$ (where $\widetilde{\mathbf X}$ denotes $\mathbf X$'s perturbed counterpart) is 
      \begin{equation*}
        \mathbb P[[\widetilde{\mathbf X}]_j=1 \mid \mathcal X^{r-1}] = \begin{cases}
          \frac{\exp(\varepsilon/2)}{\exp(\varepsilon/2)-1}, ~~[{\mathbf X}]_j=1 \\
          \frac{1}{\exp(\varepsilon/2)-1}, ~~[{\mathbf X}]_j=0
        \end{cases}
      \end{equation*}
      Furthermore, since each agent $i$ does not carry out the perturbation in round $r$ if it has no option adopted in round $r-1$, we have
      \begin{align*}
        \mathbb E[\Lambda^r_j \mid \mathcal X^{r-1}] &= \frac{1}{D^r}\sum_{\mathbf X \in \mathcal X^{r-1}} \mathbb E[[\widetilde{\mathbf X}]_j \mid \mathcal X^{r-1}] \\
        %
        %
        &= \frac{Q_j^{r-1} \left( \exp(\varepsilon/2)-1 \right) +1 }{\exp(\varepsilon/2)+1}
      \end{align*}
  	  Applying the Hoeffding's inequality (see \textbf{Theorem}~\ref{thm:hoeff}) and the union bound, we have
      \begin{align} \label{eq:LambdaQ}
        &\mathbb P \Bigg[\left|\Lambda^{r}_{j}-\frac{Q_j^{r-1} \left( \exp(\varepsilon/2)-1 \right) +1 }{\exp(\varepsilon/2)+1}\right| \leq \sqrt{\frac{5\ln N}{N}}, \forall j \Bigg|~ \mathcal X^{r-1} \Bigg] \nonumber\\
        &\geq 1- \frac{2M}{N^{10}}
      \end{align}

      Similarly, supposing there exists $c^r_{i,j} \in [\xi_0, \xi_1]$ for $\forall i,j$ according to \textbf{Lemma}~\ref{le:hypergeo} and \textbf{Lemma}~\ref{le:xi01} such that $\mathbb{E}\left[\Lambda^{r}_{i,j} \bigg| V^r_i, \widetilde{\mathcal X}^{r}\right]=c^r_{i,j} \Lambda^{r}_{j}$, we can deduce that
      \begin{equation}\label{eq:lam}
        \mathbb P\left[ \left|\Lambda^{r}_{i,j}-c^r_{i,j}\cdot\Lambda^{r}_{j}\right| \leq \sqrt{\frac{11\ln N}{2V^r_i}}, \forall i,j ~\Bigg|~ V^r_i, \widetilde{\mathcal X}^{r}\right]\ge1-\frac{2M}{N^{10}}
      \end{equation}
      As we have shown in \textbf{Lemma}~\ref{le:ubound} that with probability at least $1-\frac{6M+3}{N^{10}}$, $V^r_i \geq \frac{3(1-\beta)}{32}hg(N)$ such that for any agent $i$ and option $j$, with probability at least $1-\frac{8M+3}{N^{10}}$ (conditioned on $\widetilde{\mathcal X}^{r}$), we have
      \begin{equation}
        \left|\Lambda^{r}_{i,j}-c^r_{i,j}\cdot\Lambda^{r}_{j}\right|\le\sqrt{\frac{11\ln N}{2V^r_i}} \le \sqrt{\frac{\ln N}{3g(N)}}
      \end{equation}
      and hence,
      \begin{eqnarray} \label{eq:Lambda2ij}
        \left|\Lambda^{r}_{i,j}-\Lambda^{r}_{j}\right|
        %
        %
        &\le& \left|\Lambda^{r}_{i,j}-c^r_{i}\cdot\Lambda^{r}_{j}\right|+\left|c^r_{i}\cdot\Lambda^{r}_{j}-\Lambda^{r}_{j}\right| \nonumber\\
        &\le& \left|\Lambda^{r}_{i,j}-c^r_{i}\cdot\Lambda^{r}_{j}\right|+\left|c^r_{i}-1\right| \nonumber\\
        &\le& \sqrt{\frac{\ln N}{3g(N)}}+\max\left\{\ \left| \xi_0-1\right|,\left| \xi_1-1\right|\right\} \nonumber\\
        &\le& \sqrt{\frac{\ln N}{3g(N)}}+\sqrt{\frac{5\ln N}{N}},
      \end{eqnarray}
      %
      %
      Since
      \begin{eqnarray*}
        && \left|\Lambda^{r}_{i,j} -  \frac{Q_j^{r-1} \left( \exp(\varepsilon/2)-1 \right) +1 }{\exp(\varepsilon/2)+1} \right| \\
        %
        %
        &\leq& \left|\Lambda^{r}_{i,j}-\Lambda^{r}_{j}\right| + \left|\Lambda^{r}_{j}-  \frac{Q_j^{r-1} \left( \exp(\varepsilon/2)-1 \right) +1 }{\exp(\varepsilon/2)+1}\right| \\
        %
      \end{eqnarray*}
      we have
      \begin{align}\label{eq:basebd}
        &\mathbb P\bigg[ \bigg|\Lambda^{r}_{i,j} - \frac{Q_j^{r-1} \left( \exp(\varepsilon/2)-1 \right) +1 }{\exp(\varepsilon/2)+1} ~\bigg|~ \nonumber\\
        &~~~\leq \sqrt{\frac{\ln N}{3g(N)}}+2\sqrt{\frac{5\ln N}{N}} ~\bigg|~ \mathcal X^{r-1}\bigg]\ge1-\frac{10M+3}{N^{10}}
      \end{align}
      by combining (\ref{eq:LambdaQ}) and (\ref{eq:Lambda2ij}). It then follows that
      \begin{eqnarray}
        \mathbb P\left[\left| \widetilde Q^{r}_{i,j} - Q^{r-1}_j \right|\le \zeta \bigg| \mathcal X^{r-1}\right]\nonumber \geq 1-\frac{10M+3}{N^{10}}
      \end{eqnarray}
      where
      \begin{eqnarray*} \label{eq:zeta}
        \zeta &=& \frac{\exp(\varepsilon/2)+1}{\exp(\varepsilon/2)-1} \left(\sqrt{\frac{\ln N}{3g(N)}}+2\sqrt{\frac{5\ln N}{N}}\right) \\
        &\leq& 6\sqrt{\frac{\ln N}{g(N)}}\frac{\exp(\varepsilon/2)+1}{\exp(\varepsilon/2)-1}
      \end{eqnarray*}
      with $N$ being sufficiently large such that $g(N) \leq N$, since
      \begin{eqnarray*}
        &&\left| \widetilde Q^{r}_{i,j} - Q^{r-1}_j \right|\nonumber\\
        &\le&\left| \frac{\exp(\varepsilon/2)+1}{\exp(\varepsilon/2)-1}\Lambda_{i,j}^{r}-\frac{1}{\exp(\varepsilon/2)-1}-Q_j^{r-1}\right|\nonumber\\
        &\le&\frac{\exp(\varepsilon/2)+1}{\exp(\varepsilon/2)-1}\left| \Lambda_{i,j}^{r} - \frac{Q_j^{r-1}(\exp(\varepsilon/2)-1)+1}{\exp(\varepsilon/2)+1} \right|\nonumber\\
        %
      \end{eqnarray*}
      When $\left| \widetilde Q^{r}_{i,j} - Q^{r-1}_j \right| \le \zeta$ such that $1-M\zeta \leq \sum^M_{j=1} \widetilde{Q}^r_{i,j} \leq 1+M\zeta$, we have
      \begin{align}\label{eq:zeta_bd}
        & \left| \widehat Q^{r}_{i,j} - \widetilde Q^{r}_{i,j} \right| = \left| \frac{\widetilde Q^{r}_{i,j}}{\sum_{j'=1}^{M}\widetilde Q^{r}_{i,j'}}- \widetilde Q^{r}_{i,j}\right| \nonumber\\
        &\le \max\left(\frac{\widetilde Q^{r}_{i,j}}{1-M\zeta}-\widetilde Q^{r}_{i,j}, \widetilde Q^{r}_{i,j}-\frac{\widetilde Q^{r}_{i,j}}{1+M\zeta}\right)\nonumber\\
        %
        %
        &\le \max\left(\frac{M\zeta}{1-M\zeta}, \frac{M\zeta}{1+M\zeta}\right) \le \frac{M\zeta}{1-M\zeta} \le 2M\zeta
      \end{align}
      where the last inequality holds when $M\zeta \leq 1/2$ (which we will prove in \textbf{Lemma}~\ref{le:proximity}). Hence, for $\forall i, j$, with probability at least $1-\frac{10M+3}{N^{10}}$ (conditioned on $\mathcal X^{r-1}$),
      \begin{align}
        & \left| \widehat Q^{r}_{i,j} - Q^{r-1}_j \right| = \left| \widehat Q^{r}_{i,j} - \widetilde Q^{r}_{i,j}+\widetilde Q^{r}_{i,j}-Q^{r-1}_j\right| \nonumber\\
        &\le \left| \widehat Q^{r}_{i,j} - \widetilde Q^{r}_{i,j} \right|+ \left| \widetilde Q^{r}_{i,j}-Q^{r-1}_j\right| \le (2M+1)\zeta.
      \end{align}
      and thus $\left| \widehat Q^{r}_{j} - Q^{r-1}_j \right| = \left| \frac{1}{N}\sum_{i=1}^{N}\widehat Q^{r}_{i,j} - Q^{r-1}_j \right| \leq \frac{1}{N}\sum_{i=1}^{N}\left| \widehat Q^{r}_{i,j} - Q^{r-1}_j\right| \leq  (2M+1)\zeta$. We finally complete the proof by substituting (\ref{eq:zeta}) into the above one.
    \end{proof}

    As shown in the following \textbf{Lemma}~\ref{le:proximity}, the relationship between $\widehat{Q}^r_j$ and $Q^{r-1}_j$ under the notation ``$\sim$'' then can be derived from \textbf{Lemma}~\ref{le:qd} and \textbf{Lemma}~\ref{le:hatq}.
    \begin{lemma}\label{le:proximity}
      Let $N$ be sufficiently large such that $\sqrt{\frac{g(N)}{\ln N}} \geq c\frac{M^{\frac{\ln 5}{\delta^2}}}{\delta}$ where $c=\frac{4M(2M+1)}{\mu(1-\beta)}\frac{\exp(\varepsilon/2)+1}{\exp(\varepsilon/2)-1}$ and $\delta = \ln\left(\frac{\beta}{1-\beta}\right)$. In any round $r$, with probability at least $1-\frac{10M+3}{N^{10}}$, we have
      \begin{equation}
        \widehat{Q}_j^{r}\overset{1+2\delta_Q}{\sim}Q_j^{r-1}
      \end{equation}
      for any option $j$, where $\delta_Q= c\sqrt{\frac{\ln N}{g(N)}}$.
      %
    \end{lemma}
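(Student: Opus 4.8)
The relation $\widehat{Q}_j^{r}\overset{1+2\delta_Q}{\sim}Q_j^{r-1}$ is \emph{multiplicative}, whereas \textbf{Lemma}~\ref{le:hatq} only supplies the \emph{additive} deviation $\big|\widehat{Q}_j^{r}-Q_j^{r-1}\big|\le 6(2M+1)\sqrt{\ln N/g(N)}\,\frac{\exp(\varepsilon/2)+1}{\exp(\varepsilon/2)-1}$. The plan is therefore to convert this additive estimate into a relative one by dividing through a uniform lower bound on the denominator $Q_j^{r-1}$, and then to translate the resulting two-sided relative estimate into the $\overset{\cdot}{\sim}$ notation, using that the hypothesis on $N$ keeps $\delta_Q$ small.

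First I would invoke \textbf{Lemma}~\ref{le:qd} at round $r-1$ to secure, simultaneously for all $j$, the floor $Q_j^{r-1}\ge\frac{\mu(1-\beta)}{4M}$. Dividing the additive bound of \textbf{Lemma}~\ref{le:hatq} by this floor yields
\[
\left|\frac{\widehat{Q}_j^{r}}{Q_j^{r-1}}-1\right|\le \frac{6(2M+1)\frac{\exp(\varepsilon/2)+1}{\exp(\varepsilon/2)-1}\sqrt{\ln N/g(N)}}{\mu(1-\beta)/(4M)}=6\,c\sqrt{\frac{\ln N}{g(N)}},
\]
so the relative deviation is a fixed numerical multiple of $\delta_Q=c\sqrt{\ln N/g(N)}$. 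Next I would check that $\delta_Q$ lies in the regime $\gamma\le\tfrac12$: the assumption $\sqrt{g(N)/\ln N}\ge c\,M^{\ln 5/\delta^2}/\delta$ forces $\delta_Q\le \delta/M^{\ln 5/\delta^2}\le\delta<1$ (since $\beta<\frac{e}{e+1}$ gives $\delta=\ln\frac{\beta}{1-\beta}<1$).

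With the relative error $\gamma=O(\delta_Q)$ below $\tfrac12$, I would pass from $1-\gamma\le \widehat{Q}_j^{r}/Q_j^{r-1}\le 1+\gamma$ to the quotient form using the two elementary inequalities $1-\gamma\ge\frac{1}{1+2\gamma}$ and $1+\gamma\le 1+2\gamma$ (both valid for $\gamma\le\tfrac12$ and following from \textbf{Theorem}~\ref{thm:berineq}), which is exactly the defining condition of $\overset{1+2\gamma}{\sim}$. The same smallness simultaneously discharges the claim $M\zeta\le\tfrac12$ that was deferred inside the proof of \textbf{Lemma}~\ref{le:hatq}, because $M\zeta$ is itself bounded by a constant-times-$\delta_Q$ quantity. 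Finally I would collect the failure probabilities of \textbf{Lemma}~\ref{le:qd} and \textbf{Lemma}~\ref{le:hatq} by a union bound to reach the stated $1-\frac{10M+3}{N^{10}}$.

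The obstacle here is bookkeeping rather than any deep idea, and it is worth flagging two points. First, the numerical constant must be tracked carefully: the raw relative deviation is $6c\sqrt{\ln N/g(N)}$, so matching the paper's precise statement requires absorbing that factor into the normalization of $\delta_Q$ (equivalently, into the constant $c$ and the smallness window for $N$) — the conversion itself only promises $\overset{1+2\gamma}{\sim}$ for $\gamma$ equal to the \emph{actual} relative error. Second, one must compose two distinct sources of randomness correctly: the round-$(r{-}1)$ adoptions that determine $Q_j^{r-1}$ (and on which the floor of \textbf{Lemma}~\ref{le:qd} depends) versus the round-$r$ dissemination and sampling that determine $\widehat{Q}_j^{r}$ (on which \textbf{Lemma}~\ref{le:hatq} conditions, through $\mathcal X^{r-1}$); the union bound must be taken so that the conditioning and the floor event overlap in the intended way.
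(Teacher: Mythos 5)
Your proposal follows essentially the same route as the paper's own proof: invoke \textbf{Lemma}~\ref{le:qd} for the floor $Q_j^{r-1}\ge\frac{\mu(1-\beta)}{4M}$, divide the additive bound of \textbf{Lemma}~\ref{le:hatq} by it, verify $\delta_Q\le\frac{1}{2}$ from the hypothesis on $N$ (which also discharges the deferred $M\zeta\le\frac{1}{2}$ claim), and convert $1-\gamma\le\widehat{Q}_j^r/Q_j^{r-1}\le 1+\gamma$ into the $\overset{1+2\gamma}{\sim}$ form via $\frac{1}{1+2\gamma}\le 1-\gamma$. Your flag about the residual factor of $6$ in the relative deviation is apt — the paper's proof simply asserts the ratio is at most $\delta_Q$ where the direct computation gives $6\delta_Q$, a constant the paper silently absorbs — but this does not change the substance or structure of the argument.
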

    \begin{proof}
      From \textbf{Lemma}~\ref{le:qd} and \textbf{Lemma}~\ref{le:hatq}, it is shown that
      \begin{equation*}
        {\left| \widehat Q_j^r-Q_j^{r-1} \right|}\big/ {Q_j^{r-1}} \le \delta_Q
      \end{equation*}
      Therefore, $(1-\delta_Q)Q_j^{r-1}\le\widehat{Q}_j^r\le(1+\delta_Q)Q_j^{r-1}$. Furthermore, since $\frac{M^{\frac{\ln 5}{\delta^2}}}{\delta} \geq 2$ for $0<\delta \leq 1$, $\delta_Q \leq \frac{1}{2}$ (and thus $M\zeta<1/2$ holds in (\ref{eq:zeta_bd})). Thus $\frac{1}{1+2\delta_Q}\le(1-\delta_Q)$ and it follows that $\frac{1}{1+2\delta_Q}Q_j^{r-1}\le\widehat{Q}_j^r\le(1+2\delta_Q)Q_j^{r-1}$.
    \end{proof}
    \begin{remark}
      As shown by \textbf{Definition}~\ref{def:funcg}, we have $\sqrt{\frac{\ln N}{g(N)}}$ approach zero when $N$ becomes infinity. Therefore, when $N\rightarrow\infty$ given fixed $c$, in any round $r$, the average estimate on option $j$'s popularity, i.e., $\widehat Q^r_j$, approaches the actual popularity of the option $j$, namely $Q^{r-1}_j$, as closely as possible.
    \end{remark}

    We now are ready to prove our main result $\textbf{Theorem}~\ref{thm:main}$ following the thread shown in \cite{CelisKV-PODC17}. In particular, we investigate the dynamics of our learning algorithm by coupling it with the \textit{Multiplicative Weights Update} (MWU) method. The MWU method is a very powerful tool in a wide spectrum of learning and optimization problems. By defining a weight $W^r_j$ for $\forall j, r$ as follows
    \begin{equation}
      W_j^{r+1} = \left( (1-\mu)W_j^t+\frac{\mu}{m}\sum_{k=1}^{m}W_k^t\right)\beta^{\Phi_j^{t+1}}(1-\beta)^{1-\Phi_j^{t+1}}
    \end{equation}
    with $W^0_j = 1$ for $\forall j$, we get a probability distribution $P^r_j = \frac{W^r_j}{\sum^M_{j'=1} W^r_j}$. As demonstrated in \textbf{Lemma}~\ref{le:pqrelation}, for $\forall j, r$, $Q^r_j$ approaches $P^r_j$ as closely as possible, especially when there are a infinite number of agents.
    \begin{lemma} \label{le:pqrelation}
      Let $\delta_r = 5^r \delta_Q$. For any option $j$ in round $r$, $P_j^r\overset{1+\delta_r}{\sim}Q_j^{r}$ holds with probability at least $1-\frac{(10M+3)r}{N^{10}}$ for all choices of ${\Phi_j^r}$'s.
    \end{lemma}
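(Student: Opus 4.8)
The plan is to prove the claim by induction on $r$, coupling the learning dynamics with the MWU process through the multiplicative $\overset{c}{\sim}$ relation and conditioning throughout on a fixed realization of the quality indicators $\{\Phi_j^r\}$ (so that the bound indeed holds for all choices of the $\Phi_j^r$'s). For the base case $r=0$, the initial conditions give $P_j^0=W_j^0/\sum_{j'}W_{j'}^0=\frac{1}{M}=Q_j^0$, so the ratio is exactly $1$ and the claim holds trivially, with room to spare since $\delta_0=5^0\delta_Q=\delta_Q>0$. This exact equality at $r=0$ will matter below.

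For the inductive step I would write both $Q_j^r$ and $P_j^r$ as normalizations of per-option scores driven by the \emph{same} factor $\beta^{\Phi_j^r}(1-\beta)^{1-\Phi_j^r}$. Dividing the weight recursion by $\sum_{j'}W_{j'}^{r-1}$ shows $P_j^r\propto\big((1-\mu)P_j^{r-1}+\frac{\mu}{m}\big)\beta^{\Phi_j^r}(1-\beta)^{1-\Phi_j^r}$, while on the algorithm side \textbf{Lemma}~\ref{le:dq2} gives $D_j^r\overset{1+6\delta''}{\sim}\big((1-\mu)\widehat Q_j^r+\frac{\mu}{m}\big)N\beta^{\Phi_j^r}(1-\beta)^{1-\Phi_j^r}$ with $Q_j^r=D_j^r/\sum_{j'}D_{j'}^r$. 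I then pass from the score built on $P_j^{r-1}$ to the one built on $\widehat Q_j^r$ by chaining the inductive hypothesis $Q_j^{r-1}\overset{1+\delta_{r-1}}{\sim}P_j^{r-1}$ with \textbf{Lemma}~\ref{le:proximity}'s $\widehat Q_j^r\overset{1+2\delta_Q}{\sim}Q_j^{r-1}$, obtaining $\widehat Q_j^r\overset{(1+2\delta_Q)(1+\delta_{r-1})}{\sim}P_j^{r-1}$.

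The two elementary facts I rely on are: (i) the affine map $x\mapsto(1-\mu)x+\frac{\mu}{m}$ preserves the $\overset{1+\rho}{\sim}$ factor (the positive offset only helps), and multiplying both operands by the common nonnegative factor $\beta^{\Phi}(1-\beta)^{1-\Phi}$ preserves it too; (ii) if $\widehat a_j\overset{1+\rho}{\sim}a_j$ for every $j$ with all quantities positive (positivity holds because $\frac{\mu}{m}>0$ and, by \textbf{Lemma}~\ref{le:qd}, $Q_j^{r-1}>0$), then the normalized versions satisfy $\widehat a_j/\sum_{j'}\widehat a_{j'}\overset{(1+\rho)^2}{\sim}a_j/\sum_{j'}a_{j'}$. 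Combining (i), (ii), and the $1+6\delta''$ factor from \textbf{Lemma}~\ref{le:dq2} yields $Q_j^r\overset{\Theta}{\sim}P_j^r$ with $\Theta=\big[(1+6\delta'')(1+2\delta_Q)(1+\delta_{r-1})\big]^2$.

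It then remains to verify $\Theta\le 1+\delta_r$ with $\delta_r=5^r\delta_Q$, and this is where I expect the main obstacle to lie. Because the normalization in (ii) \emph{squares} every inherited factor, the honest one-step recursion is $1+\delta_r\le\big[(1+6\delta'')(1+2\delta_Q)\big]^2(1+\delta_{r-1})^2$; unrolling it from the \emph{exact} base value gives growth like $4(2^r-1)\delta_Q$, which I must confirm is dominated by $5^r\delta_Q$ over the relevant range of $r$. The factor $5$ has the least slack at $r=1$: there I cannot use the nominal $5^0\delta_Q$ but must use the exact equality $\delta_0=0$ from the base case, together with $6\delta''\ll\delta_Q$ (which follows from the definitions of $\delta''$ and $\delta_Q$ and the size conditions on $N$, since $c$ carries the large $\frac{1}{\mu(1-\beta)}$ and privacy factors), to get $\Theta\le(1+6\delta'')^2(1+2\delta_Q)^2\le 1+5\delta_Q$; for $r\ge 2$ the slack in $5^r$ is ample and the hypothesis $\delta_{r-1}=5^{r-1}\delta_Q$ suffices. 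Finally, a union bound over the $r$ rounds, each contributing failure probability at most $\frac{10M+3}{N^{10}}$ from \textbf{Lemma}~\ref{le:proximity} (which already subsumes the sampling and adopting estimates), gives the stated success probability $1-\frac{(10M+3)r}{N^{10}}$.
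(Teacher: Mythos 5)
Your proposal is correct and follows essentially the same route as the paper's proof: induction on $r$ with base case $P_j^0=Q_j^0$, a one-step update that accumulates the factor $\bigl[(1+6\delta'')(1+2\delta_Q)(1+\delta_{r-1})\bigr]^2$ via \textbf{Lemma}~\ref{le:dq2} and \textbf{Lemma}~\ref{le:proximity}, and a verification that this product is at most $1+5^r\delta_Q$, with the $r=1$ case handled by a direct check using the exact base equality. Your explicit justification of why normalization squares the multiplicative factor and where the factor $5$ is tightest is a slightly more detailed account of the same argument.
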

    \begin{proof}
      The proof proceeds by the inducting on $r$. It is apparent that $P^0_j = Q^0_j$ for $\forall j\in\mathcal M$. We assume that the statement holds for $\forall r \geq 1$, such that $P^r_j \overset{1+\delta_r}{\sim} Q^r_j$ with probability at least $1-\frac{(10M+3)r}{N^{10}}$ for each option $j$ in round $r$. Since
      \begin{equation*}
        P_j^{r+1}=\frac{\left( (1-\mu)P_j^r+\frac{\mu}{M}\right)\beta^{\Phi_j^{r+1}}(1-\beta)^{1-\Phi_j^{r+1}}} {\sum_{j'=1}^{M}\left( (1-\mu)P_{j'}^r+\frac{\mu}{M}\right)\beta^{\Phi_{j'}^{r+1}}(1-\beta)^{1-\Phi_{j'}^{r+1}}}
      \end{equation*}
      with probability at least $1-\frac{(10M+3)r}{N^{10}}$,
      \begin{equation*}
        P_j^{r+1}\overset{(1+\delta_r)^2}{\sim}\hspace{-2ex}\frac{\left( (1-\mu)Q_j^r+\frac{\mu}{M}\right)\beta^{\Phi_j^{r+1}}(1-\beta)^{1-\Phi_j^{r+1}}} {\sum_{j'=1}^{M}\left( (1-\mu)Q_{j'}^r+\frac{\mu}{M}\right)\beta^{\Phi_{j'}^{r+1}}(1-\beta)^{1-\Phi_{j'}^{r+1}}}
      \end{equation*}
      Furthermore, according to \textbf{Lemma}~\ref{le:dq2} and \textbf{Lemma}~\ref{le:proximity}, we deduce that
      \begin{equation}
        P_j^{r+1}\overset{(1+\delta_r)^2(1+6\delta'')^2(1+2\delta_Q)^2}{\sim}\frac{D_j^{r+1}}{\sum_{j'=1}^{m}D_{j'}^{r+1}}=Q_j^{r+1}
      \end{equation}
      with probability at least $1-\frac{(10M+3)(r+1)}{N^{10}}$. Assuming that $\delta_r=5^r\delta_Q\le1$, $\delta_Q\le\frac{1}{27}$ and $\delta''\le\frac{1}{6}\delta_Q$, we have $(1+\delta_r)^2(1+6\delta'')^2(1+2\delta_Q)^2 \leq 1+\delta_{r+1}$ hold for $\forall r\ge2$. For $r=1$ the bound can be checked by a direct calculation.
      %
    %
    \end{proof}

    As shown in \cite{CelisKV-PODC17}, let $\frac{1}{2} < \beta \leq \frac{e}{e+1}$ (and thus $0 < \delta \leq 1$) and $6\mu \leq \delta^2$. With infinite population (i.e., $N=\infty$) and uniform initialization $P^0_j = \frac{1}{M}$ for $\forall j$, for $T \geq \frac{\ln M}{\delta^2}$,
    \vspace{-2ex}
    \begin{eqnarray} \label{eq:regmwu}
      \eta_1 - \frac{1}{R}\sum^R_{r=1} \sum^M_{j=1} \mathbb E[P^{t-1}_j \Phi^r_j] \leq 3\delta
    \end{eqnarray}
    Especially, when $P^0_j \geq \tau$ (for $\forall j\in\mathcal M$) is non-uniform, the inequality (\ref{eq:regmwu}) still holds for $R \geq \frac{\ln(1/\tau)}{\delta^2}$.

    According to the (stochastic) coupling between $P^r_j$ and $Q^r_j$ shown in \textbf{Lemma}~\ref{le:proximity}, we can deduce that
    \begin{eqnarray*}
      && \frac{1}{R}\sum^R_{r=1} \sum^M_{j=1} \mathbb E[P^{t-1}_j \Phi^r_j]\\
      &\leq& \frac{1}{R}\sum^R_{r=1} \left(1-\frac{(10M+3)r}{N^{10}}\right)(1+5^r\delta_Q) \sum^M_{j=1} \mathbb E[Q^{t-1}_j \Phi^r_j] \\
      && + \frac{1}{R}\sum^R_{r=1}\frac{(10M+3)r}{N^{10}}\sum^M_{j=1} \mathbb E[P^{t-1}_j \Phi^r_j] \\
      &\leq& (1+5^R\delta_Q)\frac{1}{R}\sum^R_{r=1}\sum^M_{j=1} \mathbb E[Q^{t-1}_j \Phi^r_j] + \frac{(10M+3)R}{N^{10}} 
    \end{eqnarray*}
    Therefore,
    \begin{eqnarray}
      \eta_1 - \frac{1}{R}\sum^M_{j=1} \mathbb E[Q^{t-1}_j \Phi^r_j] \leq 3\delta + 5^R\delta_Q + \frac{(10M+3)R}{N^{10}}
    \end{eqnarray}
    When $R=\frac{\ln M}{\delta^2}$, $5^R\delta_Q \leq M^{\frac{\ln 5}{\delta^2}}c\sqrt{\frac{\ln N}{g(N)}}$. Therefore, when $N$ is sufficiently large such that
    \begin{eqnarray}
      \sqrt{\frac{g(N)}{\ln N}} \geq \frac{c M^{\frac{\ln5}{\delta^2}}}{\delta}  ~\text{and}~ N^{10} \geq \frac{(10M+3)\ln M}{\delta^3}
    \end{eqnarray}
    we can deduce that
    \begin{equation} \label{eq:ncond}
      \eta_1 - \frac{1}{R}\sum^M_{j=1} \mathbb E[Q^{t-1}_j \Phi^r_j] \leq 3\delta + 5^R\delta_Q + \frac{(10M+3)R}{N^{10}} \leq 5\delta
    \end{equation}
    %

    The above result can be extended to handle non-uniform initiation by letting $N$ be sufficiently large such that $\sqrt{\frac{g(N)}{\ln N}} \geq \frac{c (1/\tau)^{\frac{\ln5}{\delta^2}}}{\delta}$ and $N^{10} \geq \frac{(10/\tau+3)\ln(1/\tau)}{\delta^3}$ (instead of (\ref{eq:ncond})). When $T = \frac{\ln(1/\tau)}{\delta^2}$, we have $\eta_1 - \frac{1}{R}\sum^M_{j=1} \mathbb E[Q^{t-1}_j \Phi^r_j] \leq 5\delta$. Therefore, when $R > \frac{\ln M}{\delta^2}$, we can break the time into epochs, each of which consists of $\frac{\ln(1/\tau)}{\delta^2}$ rounds. In each epoch, we then have the regret function upper-bounded regardless of whether or not the initial distribution is uniform. Specifically, it is demonstrated in \textbf{Lemma}~\ref{le:qd} that, $Q^r_j \geq \frac{\mu(1-\beta)}{4M}$ for $\forall j$ with probability at least $1-\frac{6M}{N^{10}}$. We can choose to let $\tau = \frac{\mu(1-\beta)}{4M}$, such that $\eta_1 - \frac{1}{R}\sum^M_{j=1} \mathbb E[Q^{t-1}_j \Phi^r_j] \leq 5\delta + \frac{6RM}{N^{10}}$,
    %
    %
    where we add the item $\frac{RM}{N^{10}}$ by taking into the fact that the above inequality condition (i.e., $Q^r_j \geq \frac{\mu(1-\beta)}{4M}$) may not be satisfied in some rounds such that the resulting regret for the corresponding epoch (involving the rounds) is at most $1$. Hence, when $\frac{\ln(1/\tau)}{\delta^2} \leq R \leq \frac{N^{10}\delta}{6M}$, we have $\eta_1 - \frac{1}{R}\sum^M_{j=1} \mathbb E[Q^{t-1}_j \Phi^r_j] \leq 6\delta$. We finally complete the proof of \textbf{Theorem}~\ref{thm:main} by concluding all the conditions on $N$ as shown in the above lemmas.

  \vspace{-2ex}
  \subsection{Privacy Preservation} \label{ssec:prip}
    As shown in Sec.~\ref{ssec:pert}, we design a perturbing mechanism, according to which, each agent can preserve its differential privacy locally when sharing its private knowledge to its untrusted peers in each round. We now prove the efficacy of our proposed perturbing mechanism in \textbf{Theorem}~\ref{thm:rdldp}.
    \begin{theorem} \label{thm:rdldp}
      In each round, our perturbation mechanism $\mathcal F$ achieves $\varepsilon$-LDP for each agent.
    \end{theorem}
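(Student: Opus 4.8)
The plan is to verify Definition~\ref{def:ldp} directly for the coordinate-wise randomized-response rule of Eq.~(\ref{eq:pert}). First I would pin down the domain $\mathcal D$ of the private datum. By the standing assumption $\sum_{j} X^{r-1}_{i,j}\le 1$, together with the convention in \textbf{Stage 1} that an all-zero adoption is left untouched (and, by \textbf{Stage 2}, is never disseminated), the adoptions actually passed through $\mathcal F$ are exactly the one-hot vectors $e_1,\dots,e_M\in\{0,1\}^M$, where $e_j$ encodes ``option $j$ adopted.'' Consequently any two distinct admissible inputs $x=e_j$ and $x'=e_{j'}$ with $j\ne j'$ differ in \emph{exactly two} coordinates.

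Second, since the output alphabet $\mathsf{im}\mathcal D=\{0,1\}^M$ is finite and every output carries positive probability (both $\frac{e^{\varepsilon/2}}{e^{\varepsilon/2}+1}$ and $\frac{1}{e^{\varepsilon/2}+1}$ are strictly positive for $\varepsilon>0$), it suffices to bound the per-output likelihood ratio $\mathbb P[\mathcal F(x)=\mathbf y]/\mathbb P[\mathcal F(x')=\mathbf y]$ by $\exp(\varepsilon)$ for every $\mathbf y$; summing over $\mathbf y\in\mathcal K$ then recovers the set-level inequality Eq.~(\ref{eq:ldp}). Because each coordinate is perturbed independently, I would factorize $\mathbb P[\mathcal F(x)=\mathbf y]=\prod_{k=1}^M \mathbb P[\widetilde X_k=y_k\mid X_k=x_k]$, and likewise for $x'$.

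Third, forming the ratio coordinate by coordinate, every index $k\notin\{j,j'\}$ has $x_k=x'_k=0$ and so cancels. Writing $p=\frac{e^{\varepsilon/2}}{e^{\varepsilon/2}+1}$, $q=\frac{1}{e^{\varepsilon/2}+1}$ and $\rho=p/q=\exp(\varepsilon/2)$, the two surviving coordinates each contribute a factor in $\{\rho,\rho^{-1}\}$, so the entire ratio equals $\rho^{2(y_j-y_{j'})}\in[\rho^{-2},\rho^{2}]=[\exp(-\varepsilon),\exp(\varepsilon)]$. This yields $\mathbb P[\mathcal F(x)=\mathbf y]\le \exp(\varepsilon)\,\mathbb P[\mathcal F(x')=\mathbf y]$ and hence the claimed singleton bound.

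The one genuine subtlety, which I would foreground, is the ``sensitivity $=2$'' observation: a naive composition treating the $M$ bits as independent private attributes would charge $M\cdot(\varepsilon/2)$, which is far too large. The point is that these bits are not separate attributes but a single categorical value, so changing the adopted option flips \emph{precisely} two coordinates; this is exactly why calibrating each flip to budget $\varepsilon/2$ rather than $\varepsilon$ in Eq.~(\ref{eq:pert}) makes the two flipped coordinates compose to the target $\varepsilon$. I would also note that restricting $\mathcal D$ to one-hot vectors is essential, since admitting the untouched all-zero adoption would make $\mathcal F$ deterministic there and break the ratio bound---consistent with the design, in which a non-adopting agent simply abstains from dissemination.
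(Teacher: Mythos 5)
Your proof is correct and follows essentially the same route as the paper's: both bound the singleton likelihood ratio via the product form of the coordinate-wise randomized response, observe that two admissible adoption vectors differ in at most two coordinates so the ratio is at most $\exp(2\cdot\varepsilon/2)=\exp(\varepsilon)$, and then pass to arbitrary output sets by summing over their elements. Your explicit restriction of the input domain to one-hot vectors---excluding the untouched all-zero adoption, on which $\mathcal F$ acts deterministically and would otherwise break the ratio bound---is a small but worthwhile refinement that the paper's proof (which works with general $\mathbf X_1,\mathbf X_2\in\{0,1\}^M$ and the triangle inequality $\lVert\widetilde{\mathbf X}-\mathbf X_2\rVert_1-\lVert\widetilde{\mathbf X}-\mathbf X_1\rVert_1\le\lVert\mathbf X_1-\mathbf X_2\rVert_1\le 2$) leaves implicit.
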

    \begin{proof}
      According to our perturbing process (shown in Sec.~\ref{ssec:pert}), for any adoption vectors $\mathbf X_1, \mathbf X_2 \in \{0,1\}^M$ and any perturbed adoption vector $\widetilde{\mathbf X} \in \{0,1\}^M$, we have
      \begin{eqnarray*} \label{eq:ldpsoc1}
        &&{\mathbb P[\mathcal F(\mathbf X_1)=\widetilde{\mathbf X}]}/{\mathbb P[\mathcal F(\mathbf X_2)=\widetilde{\mathbf X}]}\nonumber\\
        &=&\frac{\left(\frac{\exp(\varepsilon/2)}{\exp(\varepsilon/2)+1}\right)^{M-\lVert \widetilde{\mathbf X}-\mathbf X_1\rVert_1}\left(\frac{1}{\exp(\varepsilon/2)+1}\right)^{\lVert \widetilde{\mathbf X}-\mathbf X_1\rVert_1}}{\left(\frac{\exp(\varepsilon/2)}{\exp(\varepsilon/2)+1}\right)^{M-\lVert \widetilde{\mathbf X}-\mathbf X_2\rVert_1}\left(\frac{1}{\exp(\varepsilon/2)+1}\right)^{\lVert \widetilde{\mathbf X}-\mathbf X_2\rVert_1}} \nonumber\\
        %
        %
        %
        &\le&\exp\left(\frac{\varepsilon}{2}\left(\lVert \mathbf X_1-\mathbf X_2\rVert_1\right)\right) \le \exp(\varepsilon)
      \end{eqnarray*}
      where $\|\cdot\|_1$ denotes $\ell_1$-norm. Then, for $\forall \widetilde{\mathbf X} \subseteq \{0,1\}^M$, 
      \begin{eqnarray*} 
        \frac{\mathbb P[\mathcal F(\mathbf X_1)\in\widetilde{\mathcal X}]}{\mathbb P[\mathcal F(\mathbf X_2)\in\widetilde{\mathcal X}]} = \frac{\sum_{\widetilde{\mathbf X}\in\widetilde{\mathcal X}} \mathbb P[\mathcal F(\mathbf X_1)=\widetilde{\mathbf X}]}{\sum_{\widetilde{\mathbf X}\in\widetilde{\mathcal X}} \mathbb P[\mathcal F(\mathbf X_1)=\widetilde{\mathbf X}]} \leq \exp(\varepsilon)
      \end{eqnarray*}
    \end{proof}



\vspace{-4ex}
\section{Simulations} \label{sec:exp}
  As mentioned in \textbf{Remark}~\ref{re:main}, \textbf{Theorem}~\ref{thm:main} actually implies the impacts of the number of agents $N$, the number of unknown options $M$ and privacy budget $\varepsilon$ on the regret (or learning utility) of our algorithm. Therefore, in this section, we perform extensive numerical simulations to empirically reveal the impacts of the above different parameters in addition to the theoretical analysis. In the following simulations, social graph $\mathcal G= (\mathcal N, \mathcal E)$ is constructed in a randomized manner. Specifically, given a group of agents $\mathcal N$, we randomly add edges such that the resulting graph is connected and non-bipartite. According to \textbf{Theorem}~\ref{thm:main}, we fix constants $\beta=0.505$, $\sigma=15$, $\delta=0.02$, $\mu=6.7\times 10^{-5}$ and $h=485$, as these constants actually have much less impact on the empirical analysis on our algorithm. Note that all our empirical analysis still holds when the constant parameters take another values. For each reported data sample, we repeat the experiments for thirty times and take average over the results.

  \subsection{Learn More if Paying More}\label{ssec:simn}
    We hereby first investigate the convergence of the regret of our algorithm with different numbers of agents. According to \textbf{Definition}~\ref{def:funcg}, we choose the following two definitions of $g(N)$, i.e., $g(N)=\ln^2 N$ and $g(N)=\sqrt N$, respectively. We also gradually increase the number of agent such that $N = 3, 6, 10 \times 10^3$. To concentrate on revealing the impacts of $N$ on the convergence of our algorithm, we fix $M=10$ and let $\eta_1, \cdots, \eta_M$ be uniformly distributed in the range of $[0,1]$. For the same reason, we also fix $\varepsilon=1.0$ to guarantee the local privacy for each of the agents. The experiment results are illustrated in Fig.~\ref{fig:diffn}. As shown in Fig.~\ref{fig:logn} where $g(N)=\ln^2 N$, when $N$ is smaller (e.g., $N=3 \times 10^3$), although the regret of our algorithm converges to a stable level, we cannot guarantee that it can be upper-bounded by $6\delta$. Furthermore, consistent with \textbf{Theorem}~\ref{thm:main}, if we gradually increase $N$ (e.g., let $N=6, 10 \times 10^3$) such that $N$ is sufficiently large with respect to $M$, $\varepsilon$ and $\delta$, an upper bound of $6\delta$ on the regret of our algorithm can be ensured, when our algorithm achieves convergence. In particular, we have smaller regret if letting more agents participate in the social learning process (as explained in \textbf{Remark}~\ref{re:main}). Another observation is that our algorithm achieves convergence within almost the same time horizon, even more agents are engaged. This is not surprising, since the lower bound on the time horizon for our algorithm to converge mainly depends on the number of options $M$, while $M$ is fixed in our case. According to \textbf{Theorem}~\ref{thm:mhrw}, enabling collaboration among an increasing number of agents implies higher communication overhead, but this is the price for higher learning utility and thus smaller regret. When defining $g(N)=\sqrt N$ in Fig.~\ref{fig:sqrtn}, we get very similar results. From the above observations, we learn the following lesson: \textit{by letting more agents participating in the social learning process, our algorithm results in smaller regret and higher learning utility through fully exploiting their collaboration, while ensuring the LDP for each agent.}
    \begin{figure}
      \centering
      \subfigure[$g(N)=\ln^2 N$]{
        \begin{minipage}[b]{0.47\linewidth}
          \includegraphics[width=\linewidth]{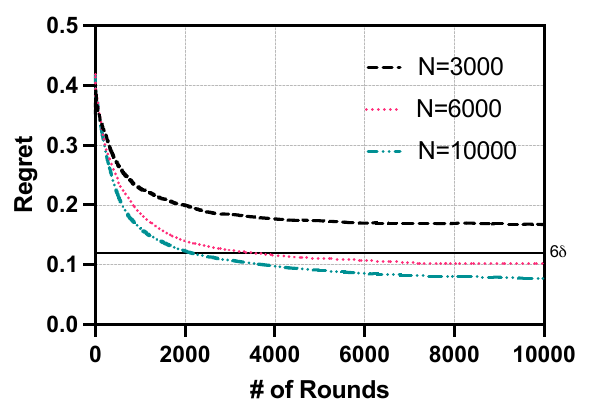}
        \end{minipage}
        \label{fig:logn}}
      \subfigure[$g(N)=\sqrt{N}$]{
        \begin{minipage}[b]{0.47\linewidth}
          \includegraphics[width=\linewidth]{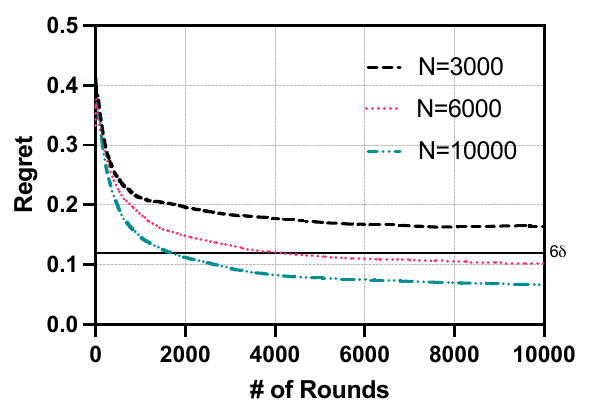}
        \end{minipage}
        \label{fig:sqrtn}}
      \caption{Convergence with different numbers of agents and different choices of $g(N)$ in the first $10^4$ rounds. We fix $M=20$ and $\varepsilon=1.0$.}
      \label{fig:diffn}
    \vspace{-3ex}
    \end{figure}

  \subsection{What If More Unknown Options Are Given?} \label{ssec:simm}
    We hereby evaluate the performance of our algorithm with different numbers of unknown options. We gradually increase $M$ from $10$ to $30$ with a step size of $10$. We also vary the number of agents such that $N=3, 6, 10 \times 10^3$ and fix $\varepsilon=1.0$. Since we get similar results with $g(N)=\ln^2 N$ and $g(N)=\sqrt N$, we only report the ones with $g(N) = \ln^2 N$, especially considering the limited space. As shown in Fig.~\ref{fig:diffm}, when there are more unknown options for the agents to learn, the regret of our algorithm still converges but to a larger value. Specifically, when there are not a sufficient number of agents while the number of unknown options is too large (e.g., $M \geq 20, N=3\times 10^3$ or $M=30, N=6\times 10^3$), the regret even cannot be bounded by $6\delta$, when our algorithm achieves convergence. Nevertheless, when we increase $N$ to $10^4$, the upper bound holds for $\forall M \leq 30$, by fully exploiting the collaboration among the large number of agents. It is revealed by the above observations that, \textit{even when there are too many unknown options for a given number of agents to learn, our algorithm achieves convergence with a loss in learning utility, while the wisdom we learnt in Sec.~\ref{ssec:simn} suggests us to let more agents participate in the social learning process to obtain higher learning utility and thus smaller regret in face of a large number of unknown options.}
    \begin{figure*}
      \centering
      \subfigure[$N=3 \times 10^3$]{
        \begin{minipage}[b]{0.28\linewidth}
          \includegraphics[width=.99\linewidth]{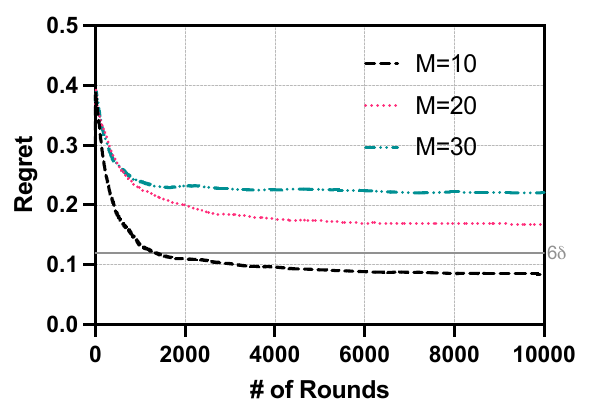}
        \end{minipage}
        \label{fig:logm3k}}
      \subfigure[$N=6 \times 10^3$]{
        \begin{minipage}[b]{0.28\linewidth}
          \includegraphics[width=.99\linewidth]{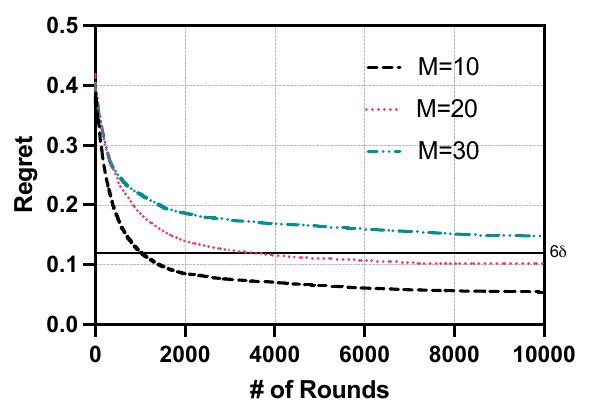}
        \end{minipage}
        \label{fig:logm6k}}
      \subfigure[$N=10 \times 10^3$]{
        \begin{minipage}[b]{0.28\linewidth}
          \includegraphics[width=.99\linewidth]{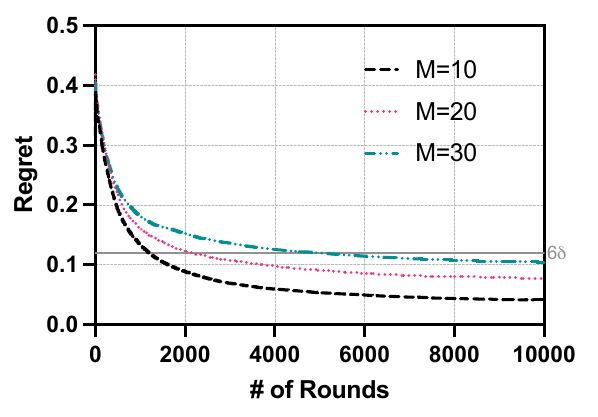}
        \end{minipage}
        \label{fig:logm10k}}
      \vspace{-2ex}
      \caption{Convergence of the regret function with different numbers of options. We let $g(N)=\ln^2 N$ and fix and $\varepsilon=1.0$.}
      \label{fig:diffm}
    \vspace{-2ex}
    \end{figure*}

  \vspace{-2ex}  
  \subsection{Privacy vs. Utility} \label{ssec:exppri}
    As mentioned in Sec.~\ref{sssec:ldp}, we can tune the privacy budget $\varepsilon$ to adjust the ``strength'' of the privacy preserving. When $\varepsilon$ is small, less privacy loss is allowed such that each agent has to introduce more perturbation on their adoption vectors, while more perturbation usually implies less learning utility and thus higher regret for our social learning algorithm. As suggested by \cite{ShokriS-CCS15,AbadiCGMMTZ-CCS16,GajaneUK-ALT18,WangZWCKW-RecSys20,RenZLS-arXiv20}, we gradually decrease $\varepsilon$ from $2.0$ to $0.5$ with a step size of $0.5$, and illustrate the impact of $\varepsilon$ on the regret of our algorithm in different networks with $N=3, 6, 10 \times 10^3$. Likewise, we choose $g(N)=\ln^2 N$ and fix $M=20$ to facilitate our empirical analysis on privacy preserving. The simulation results are reported in Fig.~\ref{fig:privacy}. We also plot the results with $\varepsilon=\infty$ as comparison, where each agent does not perturb its adoption vector such that no privacy is guaranteed. As demonstrated in  Fig.~\ref{fig:privacy}, given a group of agents participating in the social learning process, we indeed obtain higher regret and thus less learning utility by decreasing the privacy budget $\varepsilon$ and introducing more perturbation. Considering smaller privacy budget results in stronger privacy preserving as shown in \textbf{Theorem}~\ref{thm:rdldp}, this is the price we have to pay. Similar to what we have observed in Fig.~\ref{fig:diffn}, by increasing the number of agents, we have the regret converge to a larger value, while guaranteeing the privacy preserving at a high level. For example, when $N=3 \times 10^3$, the converged regret even cannot be bounded by $6\delta$ with $\varepsilon \leq 1.5$. When $N$ is increased to $10^4$, it is decreased significantly such that the upper bound $6\delta$ holds for $\forall \varepsilon \in [0.5, 2.0]$. Especially, the regret with $1.0 \leq \varepsilon \leq 2.0$ is very close to the one with $\varepsilon=\infty$. That is, when there are sufficient agents, the sacrifice in learning utility for privacy could be very little. All in all, the main conclusion we get from these simulation results is, \textit{although higher demand on privacy preserving results in a sacrifice in learning utility and this is the price we have to pay, we are able to manipulate the trade-off between privacy preserving and learning utility by leveraging the number of agents participating in the social learning process.}
    \begin{figure*}
      \centering
      \subfigure[$N=3 \times 10^3$]{
        \begin{minipage}[b]{0.28\linewidth}
          \includegraphics[width=.99\linewidth]{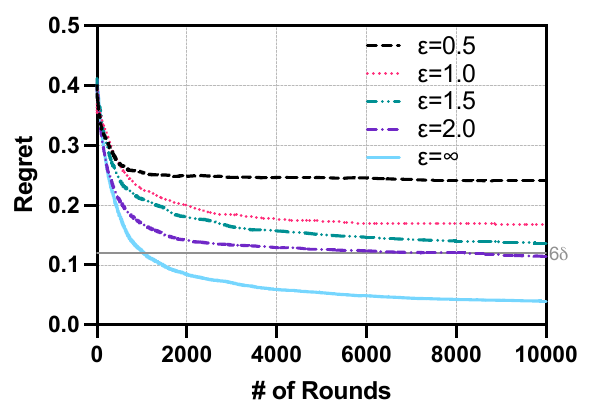}
        \end{minipage}
        \label{fig:privacy3k}}
      \subfigure[$N=6 \times 10^3$]{
        \begin{minipage}[b]{0.28\linewidth}
          \includegraphics[width=.99\linewidth]{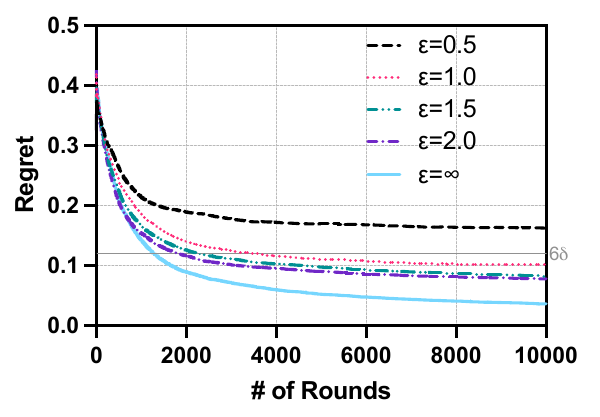}
        \end{minipage}
        \label{fig:privacy6k}}
      \subfigure[$N=10 \times 10^3$]{
        \begin{minipage}[b]{0.28\linewidth}
          \includegraphics[width=.99\linewidth]{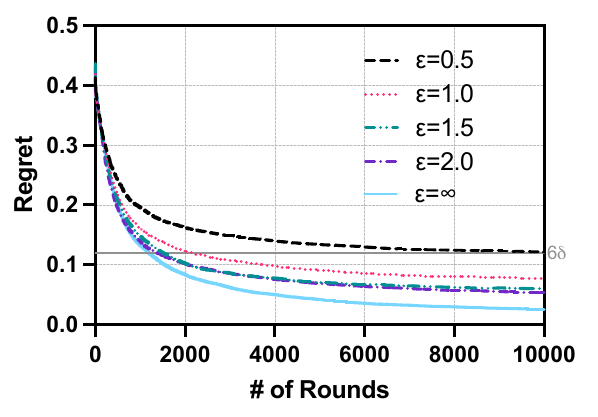}
        \end{minipage}
        \label{fig:privacy10k}}
      \vspace{-2ex}
      \caption{Convergence of the regret function with different privacy budgets and different numbers of agents. We let $g(N)=\ln^2 N$ and fix $M=20$.}
      \label{fig:privacy}
    \end{figure*}

\section{Conclusion and Future Work} \label{sec:con}
  In this paper, we have presented a distributed privacy-preserving social learning algorithm for general social networks. We leverage the notion of LDP such that each agent in the social network perturbs its private adoption for privacy preserving. We also utilize random walks to realize efficient experience sharing among the agents over the social network with general topology. We have performed solid theoretical analysis to show that when there are a sufficiently number of agents participating in the social learning process, the regret of our algorithm is bounded by a constant with affordable communication overhead (see \textbf{Theorem}~\ref{thm:mhrw} and \textbf{Theorem}~\ref{thm:main}), while the differential privacy of the agents can be preserved locally (see \textbf{Theorem}~\ref{thm:rdldp}). Extensive simulations also have been performed to empirically study the trade-off among the number of agents (or communication overhead), privacy preserving and learning utility.

  As shown in \textbf{Theorem}~\ref{thm:main}, we have quantified the trade-off between the privacy and the utility of our proposed social learning algorithm. Another interesting problem is, what is the minimum amount of noise (or perturbation) added to achieve the highest utility while preserving the differential privacy? The problem has been investigated in many recent proposals~\cite{BalleW-ICML18,GengDGK-AISTATS20}; nevertheless, these proposals characterize the optimal trade-off between the privacy and utility for the global DP model, while it is highly non-trivial to derive the minimum amount of noise under the LDP model in our decentralized social learning process.

  Another possible research direction for future is to consider asynchronous multi-agent systems. In this paper, we assume the agents are synchronized, while such an assumption may not always be available. Especially, for a large-scale multi-agent system, it is very difficult to synchronize the agents, while how to exploit efficient collaboration among the asynchronous agents is significantly challenging.

\bibliographystyle{IEEEtran}
\bibliography{ppsoc}

\end{document}